\newcommand{\first}{\texttt{first}}
\newcommand{\Nat}{\mathbb N}
\newcommand{\nan}{\vdash}
\newcommand{\bott}{\bot}
\newcommand{\undef}{\vdash}
\newcommand{\RealPlus}{{\mathbb R}_{> 0}}
\newcommand{\Rplus}{{\mathbb R}_{\geq 0}}
\newcommand{\RealZPlus}{{\mathbb R}_{\geq 0}}
\newcommand{\RealPlusBot}{{\mathbb R}_{>0}^{\nan}}
\newcommand{\AlphabetSet}{\Sigma}
\newcommand{\TAlphabetSet}{T{\kern-2pt}\Sigma}
\newcommand{\ERPred}[1]{\lhd_{#1}{\kern-1pt}}    
\newcommand{\EPPred}[1]{\rhd_{{\kern-2pt}#1}{\kern-1pt}}    
\newcommand{\Suggestion}[1]{#1}
\renewcommand{\epsilon}{\varepsilon}
\newcommand{\cP}{\mathcal{P}}
\newcommand{\calI}{\mathcal{I}}
\newcommand{\barbbv}{\bar{\mathbb{V}}}
\newcommand{\barw}{\bar{w}}
\newcommand{\barm}{\bar{m}}
\newcommand{\baru}{\bar{u}}
\newcommand{\barv}{\bar{v}}
\newcommand{\bara}{\bar{a}}
\newcommand{\bart}{\bar{t}}
\newcommand{\barkappa}{\bar{\kappa}}
\newcommand{\ECMVPL}{\textsf{ECMVPL}}
\newcommand{\dtMVPA}{\textsf{dtMVPA}}
\newcommand{\ECA}{\textsf{ECA}}
\newcommand{\VPA}{\textsf{VPA}}
\newcommand{\MVPA}{\textsf{MVPA}}
\newcommand{\ECVPA}{\textsf{ECVPA}}
\newcommand{\ECMVPA}{\textsf{ECMVPA}}
\newcommand{\eca}{\textsf{ECA}}
\newcommand{\dtvpa}{\textsf{dtVPA}}
\newcommand{\ecvpa}{\textsf{ECVPA}}
\newcommand{\dtecvpa}{\textsf{dtECVPA}}
\newcommand{\dtmvpa}{\textsf{dt-MVPA}}
\newcommand{\ecmvpa}{\textsf{ECMVPA}}
\newcommand{\ecmvpl}{\textsf{ECMVPL}}
 \newcommand{\dtecmvpa}{\textsf{dt-ECMVPA}}
\newcommand{\dtecmvpl}{\textsf{dt-ECMVPL}}
\newcommand{\name}{\textsf}
\newcommand{\concept}{\textbf}
\newcommand{\andover}{\displaystyle \bigwedge}
\newcommand{\orover}{\displaystyle \bigvee}
\newcommand{\prodover}{\displaystyle \Pi}
\newcommand{\Iff}{\mbox{~iff~}}
\newcommand{\step}[1]{\xrightarrow{#1}}
\newcommand{\ldot}{{\rm <}\kern-0.37em{\raisebox{.25ex}{\bf .}}\kern0.375em}
\newcommand{\set}[1]{\left\{ #1 \right\}}
\newcommand{\seq}[1]{\langle #1 \rangle}
\newcommand{\ram}[1]{{#1}}
\begin{document}

\title{On Timed Scope-bounded Context-sensitive Languages}
\author{
   D. Bhave\inst{1} 
   \and S. N. Krishna\inst{1}
   \and R. Phawade\inst{2}
   \and A. Trivedi\inst{3}
\authorrunning{Bhave, Krishna, Phawade, and Trivedi}
\institute{${}^1$ IIT Bombay and  
${}^2$ IIT Dharwad
${}^3$ CU Boulder
\email{$\{$devendra,krishnas$\}$@cse.iitb.ac.in,prb@iitdh.ac.in, ashutosh.trivedi@colorado.edu}
}
}
\maketitle

\begin{abstract}
    In (DLT 2016) we studied timed context sensitive
    languages characterized by multiple stack push down automata (MPA), 
    with an explicit bound on number of stages where in each stage  
    at most one stack is used ($k$-round MPA).

    In this paper, we continue our work on timed MPA  
    and study a subclass in which a symbol corresponding to
    a stack being pushed in it must be popped within fixed number of contexts of that stack
    ---scope-bounded push-down automata with multiple stacks ($k$-scope MPA).  
    We use Visibly Push-down Alphabet and Event Clocks to show that
    timed $k$-scope MPA have decidable reachability problem; are closed under  
    Boolean operations; and have an equivalent logical characterization.
\end{abstract}
 
\section{Introduction}

The Vardi-Wolper~\cite{VARDI19941} recipe for an automata-theoretic model-checking for a class of
languages requires that class to be closed under Boolean operations and 
have decidable emptiness problem.
Esparza, Ganty, and Majumdar \cite{EGM12} coined the term ``perfect languages'' for the
classes of languages satisfying these properties.
However, several important extensions of regular languages, such as pushdown
automata and timed automata,  do not satisfy
these requirements.
In order to lift the automata-theoretic model-checking framework for these classes
of languages, appropriate restrictions have been studied including visibly
pushdown automata~\cite{AM04vpda} (VPA) and event-clock automata~\cite{AFH99}
(ECA).
Tang and Ogawa~\cite{VTO09} introduced a perfect class of timed context-free
languages generalized both visibly pushdown automata and event-clock
automata to introduce event-clock visibly pushdown automata (ECVPA).

In this paper we study a previously unexplored class of timed context-sensitive 
languages inspired by the scope-bounded restriction on multi-stack visibly
pushdown languages introduced by La Torre, Napoli, and Parlato~\cite{TNP14}, and
show that it is closed under Boolean operations and has decidable emptiness
problem.
Moreover, we also present a logical characterization for the proposed subclass.

\noindent 
\textbf{Visible Stack Operations.}
Alur and Madhusudan~\cite{AM04vpda} introduced visibly pushdown automata as a
specification formalism where the call and return edges are made visible in a
structure of the word.
This notion is formalized by giving an explicit partition of the alphabet into
three disjoint sets of call, return, and internal or local symbols and the
visibly pushdown automata must push one symbol to the stack while reading a call
symbol, and must pop one symbol (given the stack is non-empty) while reading a
return symbol, and must not touch the stack while reading an 
internal symbol.

\noindent\textbf{Visible Clock Resets.}
Alur-Dill timed automata~\cite{AD90} is a generalization of finite automata with
continuous variables called clocks that grow with uniform rate in each control
location and their valuation can be used to guard the transitions.
Each transition can also reset clocks, and that allows one to constrain
transitions based on the duration since a previous transition has been taken.
However, the power of reseting clocks contributed towards timed automata not
being closed under complementation. 
In order to overcome this limitation, Alur, Fix, and Henzinger~\cite{AFH99}
introduced event-clock automata where input symbol dictate the resets of the
clocks.
In an event-clock automata every \ram{symbol} $a$ is implicitly associated with two
clocks $x_a$ and $y_a$, where the “recorder” clock $x_a$ records the time since the
last occurrence of the \ram{symbol} $a$, and the “predictor” clock $y_a$ predicts
the time of the next occurrence of \ram{symbol} $a$.
Hence, event-clock automata do not permit explicit reset of clocks and it is
implicitly governed by the input timed word.

\noindent\textbf{Visible Stack Operations and Clock Resets in Multistack Setting.}
We study dense-time event-clock multistack visibly pushdown automata  (\dtecmvpa{}) that combines
event-clock dynamics of event-clock automata with multiple visibly pushdown
stacks. We assume a partition of the alphabet among various stacks, and partition of the
alphabet of each stack into call, return, and internal symbols.
Moreover, we associate recorder and predictor clocks with each symbol.
Inspired by Atig et al.~\cite{AAS12} we consider our stacks to be dense-timed,
i.e. we allow stack symbols to remember the time elapsed since they were pushed
to the stack.

A finite timed word over an alphabet $\Sigma$ is a sequence $(a_1,t_1), \ldots,
(a_n,t_n) \in (\Sigma {\times} \RealZPlus)^*$ such that $t_i \leq t_{i+1}$ for
all $1 \leq i \leq n-1$.    
Alternatively,  we can represent timed words as tuple $(\seq{a_1,\ldots,
a_n}, \seq{t_1, \ldots, t_n})$. 
We may use both of these formats depending on the context and for technical convenience. 
Let $T\Sigma^*$ denote the set of finite timed words over $\Sigma$. 
 
We briefly discuss the concepts of rounds and scope as introduced by~\cite{TNP14}.
Consider an pushdown automata with $n$ stacks.  
We say that for a stack $h$, a (timed) word is a stack-$h$ context if all of
its \ram{symbols} belong to the alphabet of stack $h$. 
A \emph{round} is fixed sequence of exactly $n$ contexts one for each stack. 
Given a timed word, it can be partitioned into sequences of contexts of various stacks.
The word is called $k$-round if it can be partitioned into \ram{$k$ rounds.}
We say that a timed word is $k$-scoped if for each return symbol of a
stack its matching call symbol occurs within the last $k$ contexts of that stack.
A visibly-pushdown multistack event-clock automata is
\emph{scope-bounded} if all of the accepting words are $k$-scoped for a fixed $k \in \Nat$.

\begin{figure}[t]
  \begin{center}\scalebox{0.8}{
      \begin{tikzpicture}[->,>=stealth',shorten >=1pt,%
auto,node distance=2.8cm,semithick,inner sep=3pt,bend angle=45,scale=0.97]
  \tikzstyle{every state}=[circle,fill=black!0,minimum size=3pt]
  \node[initial,state, initial where=above]     (0) at(0, 0)  {$l_0$};
  \node[state]             (1) at(2.5, 0)  {$l_1$};
  \node[state]             (2) at(5, 0) {$l_2$};
  \node[state]             (3) at(8, 0) {$l_3$};
  \node[state]             (4) at(12.2,0) {$l_4$};
  \node[state]				 (5) at(12.2,-2) {$l_5$};
  \node[state]				 (6) at(12.2,-4) {$l_6$};
  \node[state]				 (7) at(8,-4) {$l_7$};
  \node[state]				 (8) at(5,-4) {$l_8$};
  \node[state]				 (9) at(2.5,-4) {$l_9$};
  \node[state,accepting]		 (10) at(0,-4) {$l_{10}$};

  \path 
    (0) edge node [above] {$\hat{a}$, push$^1$($\$$)} (1)
 
  (1) edge [loop above] node [above] {$~~a$, push$^1$($\alpha$)} (1)
  (1) edge node [above] {$b$, push$^2$($\$$)} (2)

  (2) edge [loop above] node [above] {$~~b$, push$^2$($\$$)} (2)
  (2) edge node [above] {$d$, pop$^2$($\$$)} (3);
 
    \path  (3) edge [loop above] node [above] {$d$, pop$^2$($\$$) } (3)
           (3) edge node [above] {$d$, pop$^2$($\$$) ${\in} [4,\infty)$} (4);
 
    \path (7) edge node [above] {$c$, pop$^1$($\$$)} (8);
     
    \path (8) edge node [above] {$\hat{c}$, pop$^1$($\alpha$)} (9);
    \path (8) edge [loop above] node [above] {$c$, pop$^1$($\$$)} (8);
    \path (8) edge node [below] {$x_{\hat{a}} \leq 5$} (9);

    \path (9) edge [loop above] node [above] {$d$, pop$^2$($\$$)} (9);
    \path (9) edge node [above] {$d$, pop$^2$($\$$)} (10);
    \path (9) edge node [below] {$x_b \leq 2 $} (10);
 
    \path (7) edge [loop above] node [above] {$b$, push$^2$($\$$)} (7);
     
    \path (6) edge node [above] {$b$, push$^2$($\$$)} (7);
    \path (6) edge node [below] {$y_d \leq 2$} (7);

    \path (4) edge [loop above] node [above] {$c$, pop$^1$($\$$)} (4);
    \path (4) edge node [left] {$c$, pop$^1$($\$$)} (5);
    \path (5) edge node [right] {$a$, push$^1$($\$$)} (6);
    \path (6) edge [loop right] node [right] {$a$, \ram{push}$^1$($\$$)} (6);
\end{tikzpicture} 
}
\end{center}
\caption{Dense-time Multistack Visibly Pushdown Automata used in Example \ref{lab:ex:lan1}}
\label{fig:dtvpa}
\end{figure}
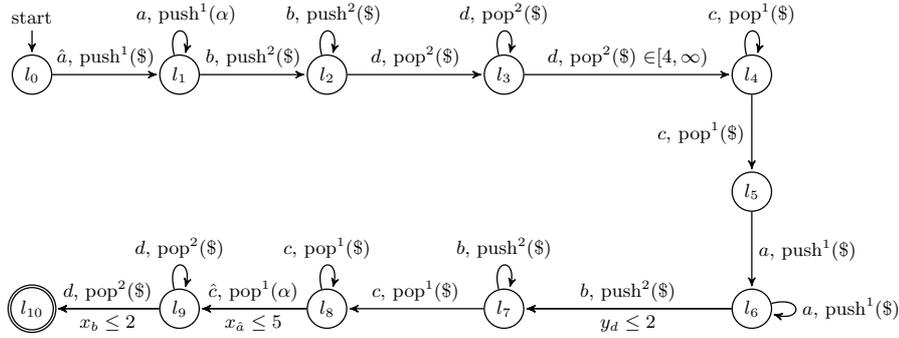

To introduce some of the key ideas of our model, let us consider the following example.

\begin{example}  
  \label{lab:ex:lan1}
  Consider the timed language whose untimed component is of the form \\
  $\ram{L=\{\hat{a}a^xb^yd^y c^la^l b^z c^x \hat{c}d^z \mid x,l,z \geq
1, ~y \geq 2\}}$
  with the critical timing restrictions among various symbols in the following manner. 
  The time delay between the first occurrence of $b$ and the last occurrence of $d$
  in the substring $b^yd^y$ is at least $4$ time-units. 
  The time-delay between this last occurrence of $d$ and the next occcurence of
  $b$ is at most $2$ time-units.
   Finally the last $d$ \ram{of the input string} must appear within $2$ time units of the last $b$, and
   $\hat{c}$ must occur within $5$ time units of corresponding $\hat{a}$. 
   This language is accepted by a \dtecmvpa{} with two stacks shown in
   Figure~\ref{fig:dtvpa}.
   We annotate a transition with the \ram{symbol} and corresponding stack operations
   if any. We write $pop^i$ or $push^i$ to emphasize pushes and pops to the
   $i$-th stack. We also use $pop^{i}(X) \in I$ to check if the age of the popped symbol $X$  
   belongs to the interval $I$. In addition, we use simple constraints on  
   predictor/recorder clock variables corresponding to the \ram{symbols}.
  Let $a,\hat{a}$ and $c,\hat{c}$ ($b$ and $d$, resp.) be call and return symbols for the
  first (second, resp.) stack.  
  The Stack alphabet for the first stack is  $\Gamma^1=\{\alpha,\$\}$ and for the second
  stack is $\Gamma^2=\{\$\}$.
  In Figure~\ref{fig:dtvpa} clock $x_a$ measures the time since the occurrence of the
  last $a$, while constraints $pop(\gamma) \in I$ checks if the age of the popped
  symbol $\gamma$ is in a given interval $I$.
This language is $3$-scoped and is accepted by a $6$-round
    \dtecmvpa{}.
But if we consider the Kleene star of this language, it will be still $3$-scoped.   
and its machine can be built by fusing states $l_0$ and $l_{10}$ of the MVPA  in Figure~\ref{fig:dtvpa}.
\end{example}

\noindent\textbf{Related Work.}
The formalisms of timed automata and pushdown stack have been combined before.
First such attempt was timed pushdown automata~\cite{Bouajjani1995} by
Bouajjani, et al and was proposed as a timed extension of pushdown automata
which uses global clocks and timeless stack. 
We follow the  dense-timed pushdown automata by Abdulla
et al~\cite{AAS12}.
The reachability checking of a given location from an intial one was shown to be
decidable for this model. 
Trivedi and Wojtczak~\cite{TW10} studied the recursive timed automata in which
clock values can be pushed onto a stack using mechanisms like pass-by-value and
pass-by-reference. They studied reachability and termination problems for this
model. 
Nested timed automata (\textsf{NeTA}) proposed by Li, et al~\cite{Li2013} is a
relatively recent model which, an instance of timed automata itself can be
pushed on the stack along with the clocks. 
The clocks of pushed timed automata progress uniformly while on the stack. 
From the perspective of logical characterization, timed matching logic, an
existential fragment of second-order logic, identified by Droste and
Perevoshchikov~\cite{DP15} characterizes dense-timed pushdown automata. 
We earlier~\cite{BDKRT15} studied MSO logic for dense-timed visibly pushdown
automata which form a subclass of timed context-free languages. 
This subclass is closed under union, intersection, complementation and
determinization. 
The work presented in this paper extends the results from~\cite{BDKPT16} for
bounded-round \dtecmvpa{} to the case of bounded-scope \dtecmvpa{}.

\noindent 
\textbf{Contributions.} We study bounded-scope \dtecmvpa{}
and show that they are closed \ram{under}
Boolean operations and the emptiness problem for these models is decidable.
We also present a logical characterization for these models.

\noindent
\concept{Organization of the paper:}
In the next section we recall the definitions of event clock and visibly pushdown automata.
In Section \ref{lab:sec:def} we define $k$-scope dense time multiple
stack visibly push down automata with event clocks and its properties.
In the following section these properties are used to decide emptiness  
checking and determinizability of $k$-scope \ecmvpa{} with event clocks.
Building upon these results, 
we show decidability of these properties for 
$k$-scope \dtecmvpa{} with event clocks.
In Section \ref{sec:mso} we give a logical characterization for models
introduced.

\section{Preliminaries}
\label{lab:sec:prelim}

We only give a very brief introduction of required
concepts in this section, and for a detailed background on these concepts we
refer the reader to~\cite{AD94,AFH99,AM04vpda}. 
We assume that the reader is comfortable with standard concepts  
such as context-free languages, pushdown automata, MSO logic from 
automata theory; and clocks, event clocks, clock constraints, and valuations 
from timed automata. Before we introduce our model, we revisit the definitions  
of event-clock automata. 

\subsection{Event-Clock Automata}
The general class of TA \cite{AD94} are not closed under Boolean
operations. An important class of TA which is determinizable is 
Event-clock automata (\ECA{})~\cite{AFH99}, and hence closed under
Boolean operations. Here the
determinizability is achieved by making clock resets ``visible''.
 
To make clock resets visible we have two clocks which are
associated with every action $a \in \Sigma$ : 
$x_a$ the ``recorder'' clock which records the time of the last  
occurrence of action $a$, and $y_a$ the ``predictor'' clock 
which predicts the time of the next occurrence of action $a$. 
For example, for a timed word $w = (a_1,t_1), (a_2,t_2), \dots, (a_n,t_n)$, the value 
of the event clock $x_{a}$ at position $j$ is $t_j-t_i$ where $i$ is the
largest position preceding $j$ where an action $a$ occurred. 
If no $a$ has occurred before the $j$th position, then the
value of $x_{a}$ is undefined denoted by a special symbol $\nan$.
Similarly, the value of $y_{a}$ at position $j$ of $w$ is undefined if symbol
$a$ does not occur in $w$ after the $j$th position. 
Otherwise, it is $t_k-t_j$ where $k$ is the first occurrence of $a$
after $j$. 
 
Hence, event-clock automata do not permit explicit reset of clocks and it is
implicitly governed by the input timed word which makes them 
determinizable and closed under all Boolean operations.  
 
We write $C$ for the set of all event clocks and we use $\RealPlusBot$ for the
set $\Suggestion{\RealPlus} \cup \{\nan\}$.
Formally, the clock valuation after reading $j$-th prefix of the input timed
word $w$, $\nu_j^w: C \mapsto \RealPlusBot$, is defined as follows:
$\nu_j^w(x_q) =  t_j {-} t_i$ if there exists an $0 {\leq} i {<} j$ such that 
$a_i = q $ and  $a_k \not = q$ for all $i {<} k {<} j$, otherwise
$\nu_j^w(x_q) = \ \nan$ (undefined). 
Similarly,  $\nu_j^w(y_q) = t_m - t_j$ if there is $j {<} m$ such that  $a_m = q$ and
    $a_l \not = q$  for all $j {<} l {<} m$, otherwise $\nu_j^w(y_q) =  \nan$.

A clock constraint over $C$ is a boolean combination of constraints of the
form $z \sim c$ where $z \in C$, $c \in \Nat$ and $\sim \in \{\leq,\geq\}$.  
Given a clock constraint $z \sim c$ over $C$, we write $\nu_i^w \models (z \sim c)$ to denote 
if $\nu_j^w(z) \sim c$. For any boolean combination $\varphi$,  
$\nu_i^w \models \varphi$ is defined in an obvious way: if $\varphi=\varphi_1 \wedge \varphi_2$, then 
$\nu_i^w \models \varphi$ iff $\nu_i^w \models \varphi_1$ and $\nu_i^w \models \varphi_2$. Likewise, the 
other Boolean combinations are defined.  
 
\ram{Let $\Phi(C)$ define all the
clock constraints defined over $C$.}

\begin{definition}
  An event clock automaton is a tuple $A = (L, \Sigma, L^0, F, E)$  where $L$ is a
set of finite locations, $\Sigma$ is a  finite alphabet, $L^0 \in L$ is the set of
initial locations, $F \in L$ is the set of final locations, and $E$ is  
a finite set of edges of the form $(\ell, \ell', a, \varphi)$ where $\ell, \ell'$ are
locations, $a \in \Sigma$, and $\varphi$ \ram{in $\Phi(C)$}. 
\end{definition}
The class of languages accepted by \eca{} have Boolean closure
and decidable emptiness~\cite{AFH99}.

\subsection{Visibly Pushdown Automata}
The class of push down automata are not determinzable and also not
closed under Boolean operations \cite{HU79}.  
The determinizability is achieved by making input alphabet
``visible'' that is for a given input letter only one kind of stack
operations is allowed giving an important subclass of    
Visibly pushdown automata~\cite{AM04vpda} which 
operate over words that dictate the stack operations.
This notion is formalized by giving an explicit partition of the
alphabet. 
This notion is formalized by giving an explicit partition of the alphabet into
three disjoint sets of \emph{call}, \emph{return}, and \emph{internal} symbols and
the visibly pushdown automata must push one symbol to stack while reading a call
symbol, and must pop one symbol (given stack is non-empty)  while reading a
return symbol, and must not touch the stack while reading the internal symbol.

\begin{definition}
A visibly pushdown alphabet is a tuple $\seq{\Sigma_c, \Sigma_r, \Sigma_{l}}$
where $\Sigma_c$ is  \emph{call} alphabet,  
$\Sigma_r$ is a \emph{return} alphabet, and 
$\Sigma_{l}$ is \emph{internal} alphabet. 
 
A \name{visibly pushdown automata(VPA)} over $\Sigma = \seq{\Sigma_c,
\Sigma_r, \Sigma_{l}}$ is a tuple
$(L, \Sigma, \Gamma, L^0, \delta, F)$ where $L$ is a finite set of locations
including a set $L^0 \subseteq L$ of initial locations, $\Gamma$ is a finite
stack alphabet with special end-of-stack symbol $\bott$, $\Delta \subseteq 
(L {\times} \Sigma_c {\times} L {\times} (\Gamma {\setminus} {\bott})) \cup (L {\times}
\Sigma_r {\times}\Gamma {\times} L) \cup (L {\times} \Sigma_{l}{\times} L)$ is
the transition relation, and $F \subseteq L$ is final locations.
\end{definition}
 
Alur and  Madhusudan~\cite{AM04vpda} showed that \VPA{}s are
determinizable and closed under boolean operations.
A language $L$ of finite words defined over visibly pushdown alphabet $\Sigma$ is  
a \name{visibly pushdown language(VPL)} if there exist a VPA $M$ such that $L(M)=L$. 
The class of languages accepted by visibly pushdown automata are closed under boolean
operations with decidable emptiness property~\cite{AM04vpda}.

\section{Dense-Time Visibly Pushdown Multistack Automata}
\label{lab:sec:def}

This section introduces scope-bounded dense-timed multistack 
visibly pushdown automata and give some properties 
about words and languages accepted by these machines. 
 
Let $\AlphabetSet= \seq{\Sigma^h_c, \Sigma^h_r, \Sigma^h_{l}}_{h=1}^n$  
where $\Sigma^i_x \cap \Sigma^j_x=\emptyset$
whenever either $i \neq j$ or $x \neq y$, and $x,y \in \{c,r,l\}$.
Let $\Sigma^h=\seq{\Sigma^h_c, \Sigma^h_r, \Sigma^h_{l}}$. 
Let $\Gamma^h$ be the stack alphabet of the $h$-th stack and 
$\Gamma=\bigcup_{h=1}^n \Gamma^h$.  
For notational convenience, we assume that  
each symbol $a \in \Sigma^h$ has an unique recorder $x_a$ and
predictor $y_a$ clock assigned to it. 
Let $C_h$ denote the set of event clocks corresponding to stack $h$ and 
$\Phi(C_h)$ denote the set of clock constraints defined over $C_h$.
Let $cmax$ be the maximum constant used in the clock constraints $\Phi(C^h)$  
of all stacks. 
Let $\calI$  denote the finite set of intervals  
$\{ [0,0],(0,1),[1,1], (1,2), \ldots,[cmax,cmax],(cmax,\infty)\}$.
\begin{definition}[\cite{BDKPT16}] A dense-timed visibly pushdown
multistack automata (\dtecmvpa{}) over
$\seq{\Sigma^h_c, \Sigma^h_r, \Sigma^h_{l}}_{h=1}^{n}$ is a tuple  
$(L, \Sigma, \Gamma, L^0, F, \Delta {=} (\Delta^h_c {\cup} \Delta^h_r {\cup} \Delta^h_{l})_{h=1}^n)$ 
where
\begin{itemize}
\item $L$ is a finite set of locations
including a set $L^0 \subseteq L$ of initial locations,
\item
$\Gamma^h$ is the finite alphabet of stack $h$ and has special end-of-stack symbol $\bott_h$, 
\item
$\Delta^h_c \subseteq (L {\times} \Sigma^h_c {\times} \Phi(C_h) {\times} L {\times} (\Gamma^h
{\setminus} \{\bott_h\}))$ is the set of call transitions,  
\item
$\Delta^h_r \subseteq (L {\times} \Sigma^h_r {\times} \calI {\times} \Gamma^h {\times} \Phi(C_h)
{\times} L)$ is set of return transitions, 
\item
$\Delta^h_{l} \subseteq (L {\times} \Sigma^h_{l} {\times} \Phi(C_h) {\times} L)$ is
set of internal transitions, 
and
\item $F {\subseteq} L$ is the set of final locations.
\end{itemize}
\end{definition}         
Let $w = (a_0,t_0), \dots, (a_e,t_e)$ be a timed word.  
A configuration of the \ram{\dtecmvpa{}} is a tuple
$(\ell, \nu_i^w, (((\gamma^1\sigma^1, age(\gamma^1\sigma^1)), \dots, (\gamma^n\sigma^n, age(\gamma^n\sigma^n)))$ where   
$\ell$ is the current location of the \ram{\dtecmvpa{}}, function $\nu_i^w$ gives the valuation of
all the event clocks at position $i \leq |w|$, $\gamma^h\sigma^h \in \Gamma^h(\Gamma^h)^*$
is the content of stack $h$ with $\gamma^h$ being the topmost symbol,
and $\sigma^h$ the string representing stack contents below $\gamma^h$,       
while $age(\gamma^h\sigma^h)$ is a sequence of real numbers denoting
the ages (the time elapsed since a stack symbol was pushed on to the stack)
of all the stack symbols in $\gamma^h \sigma^h$.
We follow the assumption that $age(\bott^h) = \seq{\undef}$ (undefined).
If for some string $\sigma^h \in (\Gamma^h)^*$ we have $age(\sigma^h) = \seq{t_1,
t_2, \ldots, t_g}$ and for $\tau \in \Rplus$ then we write $age(\sigma^h) + \tau$ for the sequence
$\seq{t_1+ \tau, t_2+ \tau, \ldots, t_g+\tau}$. 
For a sequence $\sigma^h = \seq{\gamma^h_{1}, \ldots, \gamma^h_{g}}$
and a stack symbol
$\gamma^h$ we write $\gamma^h::\sigma^h$ for $\seq{\gamma^h, \gamma^h_{1}, \ldots, \gamma^h_{g}}$. 
 
A run of a \ram{\dtecmvpa{}} on a timed word $w = (a_0,t_0), \dots, (a_e,t_e)$
is a sequence of configurations:  

\noindent 
$(\ell_0, \nu^{\Suggestion{w}}_0, 
(\seq{\bott^1},\seq{\undef}), \dots, (\seq{\bott^n},\seq{\undef}))$,
$
(\ell_1, \nu^{\Suggestion{w}}_1,((\sigma^1_1, age(\sigma^1_1)), \dots,
(\sigma^n_1, age(\sigma^n_1))))$,\\
$\dots, (\ell_{e+1}, \nu^{\Suggestion{w}}_{e+1}, (\sigma^{1}_{e+1},
age(\sigma^{1}_{e+1})), \dots, (\sigma^{n}_{e+1},
age(\sigma^{n}_{e+1})))$
where $\ell_i \in L$,
$\ell_0 \in L^0$,
\ram{$\sigma^h_i \in (\Gamma^h)^* \bott^h$}, and
for each $i,~0 \leq i \leq e$, we have:
 \begin{itemize}  
 \item
   If $a_i \in \Sigma^h_c$, then there is $(\ell_i, a_i,\varphi,
   \ell_{i+1}, \gamma^h) {\in} \Delta^h_c$ such that
   $\nu_i^w \models \varphi$. 
   The symbol $\gamma^h \in \Gamma^h \backslash \{\bott^h\}$ is then pushed 
   onto the stack $h$, and its age is initialized to zero, i.e. 
   $(\sigma^{h}_{i+1}, age(\sigma^h_{i+1}))= (\gamma^h ::\sigma^h_i,
   0::(age(\sigma^h_i)+(t_i-t_{i-1})))$.  
   All symbols in all other stacks are unchanged, and they age by $t_i-t_{i-1}$.
 \item
  If $a_i \in \Sigma^h_r$, then there is
  $(\ell_i, a_i, I, \gamma^h, \varphi, \ell_{i+1}) \in \Delta^h_r$
  such that $\nu_i^w \models \varphi$. 
  Also, $\sigma^h_{i} = \gamma^h :: \kappa \in \Gamma^h (\Gamma^h)^*$ and 
  $age(\gamma^h)+(t_i-t_{i-1}) \in I$.
  The symbol $\gamma^h$ is popped from stack $h$ obtaining  $ \sigma^h_{i+1}=\kappa$ and 
  ages of remaining stack symbols are updated i.e.,  
\ram{ $age(\sigma^h_{i+1})=age(\kappa)+  (t_i-t_{i-1}) $. }
  However, if $\gamma^h= \seq{\bott^h}$, then $\gamma^h$  is not popped.  
The contents of all other stacks remains unchanged, and simply age by  $(t_i-t_{i-1})$. 

  \item
If $a_i {\in} \Sigma^h_{l}$, then there is $(\ell_i, a_i, \varphi,
\ell_{i+1}) {\in} \Delta^h_{l}$ such that $\nu_i^w \vDash \varphi$. 
In this case all stacks remain unchanged i.e. $\sigma^h_{i+1}{=}\sigma^h_i$,  
\ram{but their contents age by $t_i-t_{i-1}$ i.e.  
$age(\sigma^h_{i+1}){=}age(\sigma^h_i)+(t_{i}-t_{i-1})$ for all $1
\leq h \leq n$.}
\end{itemize}  

A run $\rho$ of a {\dtecmvpa{}} $M$ is accepting if it terminates in a final location. 
A timed word $w$ is an accepting word if there is an accepting run of $M$ on $w$.
The language $L(M)$ of a \ram{\dtecmvpa{}} $M$, is the set of all  timed words $w$
accepted by $M$ \ram{and is called \dtecmvpl{}.}

A \ram{\dtecmvpa{}} $M=(L, \AlphabetSet, \Gamma, L^0,
F, \Delta)$ is said to be \emph{deterministic} if it has exactly one
start location, and for every configuration  and input action exactly one
transition is enabled. Formally, we have the following conditions:
 \ram{ 
for any two moves $(\ell, a, \phi_1, \ell',\gamma_1)$ and $(\ell, a, \phi_2, \ell'',
	 \gamma_2)$ of $\Delta^h_c$, condition $\phi_1 \wedge \phi_2$ is unsatisfiable; 
for any two moves $(\ell, a, I_1, \gamma, \phi_1, \ell')$ and $(\ell, a, I_2, \gamma,
	 \phi_2, \ell'')$  in $\Delta^h_r$, either  $\phi_1 \wedge \phi_2$ is unsatisfiable  
	 or $I_1 \cap I_2 = \emptyset$; and
for any two moves $(\ell, a, \phi_1, \ell')$ and $(\ell, a, \phi_2, \ell')$  
	in $\Delta^h_{l}$, condition $\phi_1 \wedge \phi_2$ is unsatisfiable.}

%
%
\ram{An Event clock multi stack visibly push down automata ($\ECMVPA$) is a  
$\dtecmvpa{}$ where the stacks are untimed i.e., 
a \dtecmvpa{} $(L, \AlphabetSet, \Gamma, L^0, F, \Delta)$, with $I = [0, +\infty]$  
for every $(\ell, a, I, \gamma, \phi, \ell') {\in} \Delta^h_r$, is an \ECMVPA{}.} 
 
A $\dtecvpa{}$ is a $\dtecmvpa{}$ restricted to single stack. 
 
 
We now define a \emph{matching relation} $\sim_h$ on the positions of input timed word $w$  which identifies matching call and return positions for each stack $h$.
Note that this is possible because of the visibility of the input symbols.
 
\begin{definition}[Matching relation]
Consider a timed word $w$ over $\Sigma$. 
\ram{Let $\cP^h_c$ (resp. $\cP^h_r$) denote the set of positions in $w$ where a symbol from  
$\Sigma^h_c$ i.e. a call symbol (resp. $\Sigma^h_r$ i.e. a return symbol ) occurs.
Position i (resp. j) is called \emph{call position (resp. return
position).}}
For each stack $h$ the timed word $w$, defines a \emph{matching relation}  
$\sim_h \subseteq \cP^h_c \times \cP^h_r$ satisfying the following conditions: 
 
\begin{enumerate}  
\item  for all positions $i,j$ with $i \sim_h j$ we have $i < j$,
\item  for any call position $i$ of $\cP^h_c$ and any return position
$j$ of $\cP^h_r$ 
with $i <j$, there exists $l$ with $i \leq l \leq  j$ for which either
$i \sim_h l$ or $l \sim_h j$,
\item for each call position $i \in \cP^h_c$ (resp. $i \in \cP^h_r$) there is at most
one return position $j \in \cP^h_r$ (resp. $j \in \cP^h_c$) with $i \sim_h j$ (resp. $j \sim_h i$).
\end{enumerate}

\end{definition} 
 
For $i \sim_h j$, position $i$ (resp. $j$) is called  
\emph{matching call (resp. matching return)}.
 
This definition of matching relation extends that defined by La Torre, et al~\cite{TNP16} to timed words.
As matching relation is completely determined by stacks and timestamps of the input word does not play any role, we claim that above definition uniquely identifies matching relation for a given input word $w$ using uniqueness proof from~\cite{TNP16}.
 
Fix a $k$ from $\Nat$.  
A \emph{stack-$h$ context} is a word in $\Sigma^h (\Sigma^h)^*$. 
Given a word $w$ and a stack $h$, the word $w$ has $k$ maximal $h$-contexts if 
$w \in (\Sigma^h)^* ( (\bigcup_{h \neq h'} \Sigma^{h'})^*  (\Sigma^h)^* )^{k-1}$.
A timed word over $\Sigma$ is \concept{$k$-scoped} if for each
matching call of stack $h$, its corresponding return occurs within at most
$k$ maximal stack-$h$ contexts.
 
 
Let $Scope(\Sigma,k)$ denote the set of all $k$-scope timed words over $\Sigma$.  
For any fixed $k$, a $k$-scope \ram{\dtecmvpa{}}
over $\Sigma$ is a tuple \ram{$A=(k,M)$} where $M=(L, \AlphabetSet, \Gamma, 
L^0, F, \Delta)$ is a \ram{\dtecmvpa{}} over $\Sigma$.  The language accepted by $A$ is 
$L(A)=L(M) \cap Scope(\Sigma,k)$ and is called $k$-scope 
\ram{dense-timed multistack visibly pushdown language ($k$-scoped-\dtecmvpl{}).  
We define $k$-scoped-\ECMVPL{} in a similar fashion.}

We now recall some key definitions from La Torre~\cite{TNP14,TNP16}
which help us extend the notion of scoped words from untimed to timed words.

%
%
 \begin{definition}[$k$-scoped splitting~\cite{TNP14,TNP16}]
   A \name{cut} of $w$ is $w_1 {:} w_2$ where $w=w_1w_2$. The cutting of $w$ is marked by ``:''.
   A \name{cut is $h$-consistent} with matching relation $\sim_h$ if no call occuring in $w_1$ matches with a return in $w_2$ in $\sim_h$.

   A \name{splitting} of $w$ is a set of cuts $w_1 \ldots
w_i:w_{i+1}\ldots w_m$ such that $w=w_1\ldots w_i w_{i+1}\ldots w_m$
for each $i$ in $\{1,\ldots,m-1\}$. 
   An \name{h-consistent splitting} of $w$ is the one in which each  
specified cut is $h$-consistent. 
   A \name{context-splitting} of word $w$ is a splitting $w_1:w_2:\ldots:w_m$  
such that each $w_i$ is an $h$-context for some stack $h$ and $i \in \{1,\ldots,m\}$. 
   A \name{canonical context-splitting} of  word is a context-splitting
of $w$ in which no two consecutive contexts belong to the same stack.  
\end{definition}

 
Given a context-splitting of timed word $w$, we obtain its
\name{$h$-projection} by removing all non stack-$h$ contexts. 
See that an $h$-projection is a context-splitting. 
An ordered tuple of $m$ $h$-contexts is \name{$k$-bounded} if there
there exists a $h$-consistent splitting of this tuple, where 
each component of the cut in the splitting is a concatenation
of at most $k$ consecutive $h$-contexts of given tuple. 
A \name{$k$-scoped splitting} of word $w$ is the canonical splitting
of $w$ equipped with additional cuts for each stack $h$ such that, if we take
$h$-projection of $w$ with these cuts it is $k$-bounded. 

 
 
The main purpose for introducing all the above definitions is to come up with a scheme which will permit us to split any arbitrary length input timed word into $k$-scoped words.
Using \cite{TNP14,TNP16} for untimed words we get the following
Lemma. 
\begin{lemma}A timed word $w$ is $k$-scoped iff there is a $k$-scoped splitting of $w$. 
    \label{lab:lm:kscope-spilt} 
\end{lemma}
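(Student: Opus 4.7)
The plan is to reduce the claim to its untimed counterpart, which is stated to hold by the results of La Torre et al.~\cite{TNP14,TNP16}. The key observation I would rely on is that both sides of the biconditional in the lemma are insensitive to the timestamps that decorate a timed word: the matching relation $\sim_h$, the notion of stack-$h$ context, the notion of an $h$-consistent cut, and the notion of a $k$-bounded tuple of $h$-contexts are all defined purely in terms of the sequence of symbols from $\Sigma$ and their partition into call, return, and internal alphabets. In particular the text preceding the lemma already remarks that ``matching relation is completely determined by stacks and timestamps of the input word does not play any role''. I would make this intuition the backbone of the proof.

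Concretely, the first step is to let $u = a_1 a_2 \ldots a_e$ be the untimed projection of $w = (a_1,t_1),\ldots,(a_e,t_e)$, and to verify that the matching relations $\sim_h$ induced by $w$ and by $u$ coincide, position by position. This is immediate by inspection of the three conditions in the definition of the matching relation, since none of them refer to timestamps. Once this identification is made, the definition of \textbf{$k$-scoped} (each matching call is closed within $k$ maximal $h$-contexts) applied to $w$ yields exactly the same condition as applied to $u$; hence $w$ is $k$-scoped iff $u$ is $k$-scoped.

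Next, I would lift the equivalence to splittings. A splitting of $w$ is a set of cut positions $i_1 < i_2 < \cdots < i_{m-1}$; the same set of positions gives a splitting of $u$, and the conditions of being a context-splitting, canonical context-splitting, $h$-consistent cut, and $k$-boundedness of $h$-projections are all formulated in terms of symbols only. Thus there is an obvious bijection between $k$-scoped splittings of $w$ and of $u$.

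Finally, I would invoke the corresponding lemma for untimed words from~\cite{TNP14,TNP16} stating that $u$ is $k$-scoped iff $u$ admits a $k$-scoped splitting. Combining this with the two equivalences above yields: $w$ is $k$-scoped $\iff$ $u$ is $k$-scoped $\iff$ $u$ has a $k$-scoped splitting $\iff$ $w$ has a $k$-scoped splitting. I do not expect any real obstacle here; the only thing to be careful about is to explicitly record the bijection between splittings of $w$ and splittings of $u$, so that the appeal to the untimed lemma is justified.
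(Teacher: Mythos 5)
Your proposal is correct and follows essentially the same route as the paper, which simply derives the lemma from the untimed result of La Torre et al.~\cite{TNP14,TNP16} on the grounds that timestamps play no role in the matching relation or in splittings. Your version merely makes explicit the reduction (untimed projection, coincidence of $\sim_h$, bijection of splittings) that the paper leaves implicit.
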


Next we describe the notion of switching vectors for timed
words~\cite{BDKRT15}, which are used in determinization of $k$-scope
$\dtecmvpa{}$.   

\subsection{Switching vectors}
\label{lab:subsec:split-switch}
Let $A$ be $k$-scoped \dtecmvpa{} over $\Sigma$ and let $w$ be a timed word
accepted by $A$.  
Our aim is to simulate $A$ on $w$ by $n$ different \dtecvpa{}s, $A^h$  
for each stack-$h$ inputs.
We insert a special symbol $\#$ at the end of each
maximal context, to obtain word $w'$ over $\Sigma \cup \{ \#,\#'\}$. 
We also have recorder clocks $x_{\#}$ and predictor clocks $y_{\#}$ for symbol $\#$. 
For $h$-th stack, let \dtecvpa{} $A^h$ be the restricted version of
$A$ over alphabet $\Sigma \cup \{ \#, \#'\}$ which simulates $A$ on  
input symbols from $\Sigma^h$.
Then, it is clear that at the symbol before $\#$, stack $h$ may be touched
by $\dtecmvpa{}$ $A$ and at the first symbol after $\#$, stack $h$ may be  
touched again. But it may be the case that at positions where $\#$
occurs stack $h$ may not be empty i.e., cut defined position of \# may
be not be $h$-consistent. 

To capture the behaviour of $A^h$ over timed word $w$ we have a  
notion of switching vector.  
Let $m$ be the number of maximal $h$-contexts in word $w$ 
and $w^h$ be the $h$-projection of $w$ i.e., $w^h=u^h_1 \ldots
u^h_m$. In particular, $m$ could be more than $k$.
A switching vector $\mathbb{V}^h$ of $A$  
for word $w$ is an element of $(L,\calI,L)^m $, where $\mathbb{V}^h[l]=(q,I_l,q')$  
if in the run of $A$ over $w^h$ we have $q \step{u^h_l} q'$.

Let $w'^h = u^h_1 \# u^h_{2}\# \ldots u^h_{m}\#$, where  
$u^h_i=(a^h_{i1},t^h_{i1}), (a^h_{i2},t^h_{i2}) \ldots
(a^h_{i,s_i},t^h_{i,s_i})$  
is a stack-$h$ context, where $s_i=|u^h_i|$. 
Now we assign time stamps of the last letter read in the previous
contexts to the current symbol $\#$ to get the word
$\kappa^h=u^h_1 (\#,t^h_{1,s_1}) u^h_{2}(\#,t^h_{2,s_2}) \ldots
u^h_{m}(\#,t^h_{m,s_m})$.
 
We take the word $w'^h$ and looking at this word we construct another
word $\bar{w}^{h}$ by inserting symbols $\# '$ at places where the stack is empty
after popping some symbol, and if $\#'$ is immediately followed by
$\#$ then we drop $\#$ symbol. We do this in a very canonical way as follows:
In this word $w'^h$ look at the first call position $c_1$ and  
its corresponding return position $r_1$.  
Then we insert $\# '$ after position $r_1$ in $w^h$. 
Now we look for next call position $c_2$ and its corresponding return
position $r_2$ and insert symbol $\# '$ after $r_2$. We repeat this
construction for all call and its corresponding return positions in
$w'^h$ to get a timed word $\barw^{h}$ over $\Sigma \cup \{\#, \#'\}$. 
Let $\barw^{h}=\baru^h_1 \widehat{\#} \baru^h_2 \widehat{\#} \ldots
\widehat{\#} \baru^h_z$, where $\widehat{\#}$ is either $\#$ or $\#'$, and 
$\baru^h_i=(\bara^h_{i1},\bart^h_{i1}), (\bara^h_{i2},\bart^h_{i2})
\ldots (\bara^h_{i,s_i},\bart^h_{i,s_i})$, is a timed word.  
 
The restriction of $A$ which reads $\barw^h$ is denoted by $A^h_k$.  
Assign timestamps of the last letter read in the previous
contexts to the current symbol $\hat{\#}$ to get the word
$\bar{\kappa}^h=
\baru^h_1 (\hat{\#},\bart^h_{1,s_1})
\baru^h_{2}(\hat{\#},\bart^h_{2,s_2}) \ldots
\baru^h_{z}(\hat{\#},\bart^h_{z,s_z})$, where $s_i=|\baru^h_{i}|$ for
$i$ in $\{1,\ldots,z\}$. 

A stack-$h$ \emph{switching vector} $\barbbv^h$ is a $z$-tuple of the form $(L, \calI, L)^z$,  
where $z > 0$ and for every $j \leq z$ if $\barbbv^h[j] = (q_j, I_j, q'_j)$ then there is a run of  
$A^h$ from location $q_j$ to $q'_j$.  
 
By definition of $k$-scoped  
word we are guaranteed to find maximum $k$ number of $\#$ symbols from $c_j$ to $r_j$. 
And we also know that stack-$h$ is empty whenever we encounter $\#'$ in the word.  
In other words, if we look at the switching vector $\barbbv^h$ of $A$
reading $\barw^h$, it can be seen as a product of switching vectors of
$A$ each having a length less than $k$. 
Therefore, $\barbbv^h = \prodover_{i=1}^r V^h_i$ where $r \leq z$ and  
$V^h_i = (L \times \calI \times  L )^{\leq k}$.  
When we look at a timed word and refer to the switching vector corresponding
to it, we view it as tuples of switching pairs, but when we look at
the switching vectors as a part of state of $A^h_k$ then we see
at a product of switching vectors of length less than $k$.

A \emph{correct sequence of context switches}  for $A^h_k$ wrt $\barkappa^h$
is a sequence of pairs  \ram{$\barbbv^h=P^h_1 P^h_2 \dots P^h_z$}, where 
$P^h_{i}=(\ell^h_i,I^h_i, \ell'^h_i)$, \ram{$2 \leq h \leq n$, }
$P^h_{1}=(\ell^h_{1},\nu^h_{1}, \ell'^h_{1})$ 
and  $I^h_i \in \mathcal{I}$ such that 
\begin{enumerate}
\item Starting in $\ell^h_{1}$, with the $h$-th stack containing  
$\bot^h$, and an initial valuation $\nu^h_{1}$ of all recorders and  
predictors of $\Sigma^h$, the \ram{\dtecmvpa{}} $A$ processes $u^h_{1}$ and  
reaches some $\ell'^h_{1}$ with stack content $\sigma^h_{2}$ 
and clock valuation $\nu'^h_{1}$. The processing of $u^h_{2}$ by $A$ 
 then starts at location $\ell^h_{2}$, and a time $t \in I^h_{2}$
has elapsed between the processing of $u^h_{1}$ and $u^h_{2}$.  Thus, $A$ 
 starts processing $u^h_{2}$ in $(\ell^h_{2}, \nu^h_{2})$ where 
$\nu^h_{2}$ is the valuation of all recorders and predictors updated  
from $\nu'^h_{1}$ with respect to $t$. The stack content remains 
same as $\sigma^h_{2}$ when the processing of $u^h_{2}$ begins.

\item In general, starting in $(\ell^h_{i}, \nu^h_{i})$, $i >1$ with the  
$h$-th stack containing $\sigma^h_{i}$, and $\nu^h_{i}$ obtained from  
$\nu^h_{i-1}$ by updating all recorders and predictors based on the  
time interval $I^h_{i}$ that records the time elapse between processing 
$u^h_{i-1}$ and $u^h_{i}$, $A$ processes $u^h_{i}$ and reaches  
$(\ell'^h_{i}, \nu'^h_{i})$ with stack content $\sigma^h_{i+1}$.  
The processing of $u^{h}_{i+1}$ starts after time $t \in I^h_{i+1}$ has  
elapsed since processing  $u^h_{i}$ in a location $\ell^h_{i+1}$, and  
stack content being $\sigma^h_{i}$.
\end{enumerate}

These switching vectors were used in 
to get the determinizability of $k$-round \ram{$\dtecmvpa$}~\cite{BDKRT15}
In a $k$-round \dtecmvpa{}, we know that there at most $k$-contexts of stack-$h$ and 
hence the length of switching vector (whichever it is) is at most $k$ for
any given word $w$. See for example the MVPA corresponding to Kleene
star of language given in the Example \ref{lab:ex:lan1}. 
In $k$-scope MVPA for a given $w$, we do not know beforehand what is the
length of switching vector. So we employ not just one switching vector
but many one after another for given word $w$, and we maintain that
length of each switching vector is at most $k$. This is possible because
of the definition of $k$-scope $\dtecmvpa{}$ and Lemma~\ref{lab:lm:kscope-spilt}.

\begin{lemma}(Switching Lemma for $A^h_k$)
\label{switch}
Let $A=(k, L, \AlphabetSet, \Gamma, L^0, F, \Delta)$ be a
\ram{$k$-scope-\dtecmvpa{}.}
Let $w$ be a timed word \ram{with $m$ maximal $h$-contexts and} accepted by $A$ .  
Then we can construct a \ram{\dtecvpa{}} $A^h_k$ over $\Sigma^h \cup \{\#,\#'\}$ 
such that $A^h_k$ has a run over $\barw^{h}$ witnessed by a switching sequence
$\barbbv^h = \prodover_{i=1}^r \barbbv^h_i$ where \ram{$r \leq z$}
and $\barbbv^h_i = (L \times \calI \times  L )^{\leq k}$ which ends in the last component 
$\barbbv^h_r$ of $\barbbv^h$ \Iff there exists a $k$-scoped switching sequence
    $\barbbv'^h$ of
    switching vectors of $A$ such that for any $v'$ of
\ram{$\barbbv'^h$} there exist
    $v_i$ and $v_j$ in $\barbbv'$ with $i \leq j$ and $v'[1]=v_i[1]$
    \ram{and $v'[|v'|]=v_j[|v_j|]$. }
\end{lemma}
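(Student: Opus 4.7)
The plan is to establish both directions of the biconditional by exploiting the structural correspondence between the original $\dtecmvpa$ $A$ running on $w$ and the restricted single-stack machine $A^h_k$ running on $\bar{w}^h$. The key combinatorial input is Lemma~\ref{lab:lm:kscope-spilt}: since $w$ is $k$-scoped, there is a $k$-scoped splitting of $w$ whose $h$-projection has the property that any two consecutive $\#'$ markers in $\bar{w}^h$ (which are inserted exactly at the positions where stack $h$ returns to empty after a matched return) are separated by at most $k$ maximal stack-$h$ contexts. This is precisely what licenses the decomposition $\bar{\mathbb{V}}^h = \prodover_{i=1}^r \bar{\mathbb{V}}^h_i$ with each $\bar{\mathbb{V}}^h_i$ of length at most $k$, and with $r \leq z$ since each sub-block ends at some $\widehat{\#}$ position of $\bar{w}^h$.

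For the ($\Rightarrow$) direction I would start with an accepting run of $A^h_k$ on $\bar{w}^h$ that realizes the switching sequence $\bar{\mathbb{V}}^h$. Each stack-$h$ context $\bar{u}^h_j$ of $\bar{w}^h$ is processed by $A^h_k$ from some entry location $\ell^h_j$ to an exit location $\ell'^h_j$, the time elapsed since the preceding context falling in some interval $I^h_j \in \calI$ recorded via the event clocks $x_\#, y_\#$ along $\bar{\kappa}^h$. Collecting the triples $(\ell^h_j, I^h_j, \ell'^h_j)$ produces a switching sequence of $A$; the $\#'$ positions witness that stack $h$ is empty at the boundaries between blocks, so the fragments can be safely reassembled into a valid run of $A$ on $w$. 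The $k$-scopedness of the resulting switching sequence is inherited from $|\bar{\mathbb{V}}^h_i| \leq k$ together with the stack-empty certificates $\#'$, and the endpoint conditions $v'[1]=v_i[1]$ and $v'[|v'|]=v_j[|v_j|]$ simply express that the entry (resp. exit) location of each block aggregates the entry (resp. exit) of its first (resp. last) constituent context.

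For the ($\Leftarrow$) direction I would construct the run of $A^h_k$ on $\bar{w}^h$ piecewise from a given $k$-scoped switching sequence $\bar{\mathbb{V}}'^h$. Split $\bar{\mathbb{V}}'^h$ at the stack-$h$-empty boundaries; by the $k$-scoped assumption each sub-block has length at most $k$, giving the factors $\bar{\mathbb{V}}^h_i$. Within each sub-block, the semantics of a switching vector (Section~\ref{lab:subsec:split-switch}) supplies an $A$-run fragment on the corresponding contiguous $h$-contexts; inserting $\#$ transitions at the interior context boundaries and $\#'$ transitions at block boundaries, and taking the matching timing guards $I^h_j \in \calI$ on the $\widehat{\#}$-edges, yields a legitimate run of $A^h_k$ on $\bar{w}^h$ whose recorded triples form $\bar{\mathbb{V}}^h = \prodover_{i=1}^r \bar{\mathbb{V}}^h_i$ ending with the last factor $\bar{\mathbb{V}}^h_r$.

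The main obstacle is the time-faithfulness of the translation: the event clock valuations seen by $A$ along the fully interleaved $w$ must agree, at matching positions, with those seen by $A^h_k$ along the stack-$h$-only word $\bar{w}^h$, and the stack-age constraints $age(\gamma^h){+}(t_i{-}t_{i-1}) \in I$ in return transitions must remain sound after the reordering induced by the $h$-projection. This is handled by the auxiliary timestamping in $\bar{\kappa}^h$, which copies the time of the last symbol of each preceding context onto the following $\widehat{\#}$, ensuring that the interval guards $I^h_j$ faithfully record the elapsed time between consecutive $h$-contexts; and by the $k$-scoped bound, which guarantees that each pushed symbol is popped within a window of at most $k$ $h$-contexts, so its age remains trackable locally inside one factor $\bar{\mathbb{V}}^h_i$. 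Verifying these clock and age bookkeeping equivalences at every $\widehat{\#}$ and every matched call/return pair is the most delicate part; once it is done, the switching-vector semantics and the $k$-scoped splitting of Lemma~\ref{lab:lm:kscope-spilt} assemble into both directions of the equivalence.
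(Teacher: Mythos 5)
Your proposal follows essentially the same route as the paper's proof: the same $\#$/$\#'$ insertion with $\#'$ marking empty-stack boundaries, the same appeal to Lemma~\ref{lab:lm:kscope-spilt} to bound each factor $\barbbv^h_i$ by $k$, the same timestamp transfer via $\barkappa^h$ for the interval guards, and the same block-wise embedding of the $z$ contexts of $\barw^h$ into the $m$ contexts of $w'^h$. The only difference is presentational --- the paper writes out the explicit transition relation of $A^h_k$ (appending a component on $\#$, flushing to a fresh single component on $\#'$, guards $x_{\#}=0$ on the first symbol after a separator) while you describe its behaviour implicitly, but you are correspondingly more explicit about the two directions of the biconditional, which the paper compresses into one paragraph.
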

\begin{proof}        
 
    We construct a \dtecvpa{} 
    $A^k_h= (L^h, \AlphabetSet \cup \{\#,\#'\}, \Gamma^h, L^0, F^h=F, \Delta^h)$
where, 
$L^h  \subseteq (L  \times \mathcal{I} \times L )^{\leq k} \times
    \Sigma \cup \{\#,\#'\}$ and $\Delta^h$ are given below. 
\ram{\begin{enumerate}  
\item For $a$ in $\Sigma$:\\
    $(P^h_{1}, \ldots , P^h_{i}=(q,I^h_i,q'),b) \step{a, \phi} (P^h_{1},
	\ldots , P'^h_{i}=(q,I'^h_i,q''),a)$, when 
	$q' \step{a,\phi} q''$ is in $\Delta$, and $b \in \Sigma$. 
\item For $a$ in $\Sigma$:\\
    $(P^h_{1}, \ldots , P^h_{i}=(q,I^h_i,q'),\#) \step{a, \phi \wedge
	x_{\#}=0} (P^h_{1}, \ldots , P'^h_{i}=(q,I'^h_i,q''),a)$, when 
	$q' \step{a,\phi} q''$ is in $\Delta$, and $b \in \Sigma$. 
\item For $a$ in $\Sigma$:\\
    $(P^h_{1}, \ldots , P^h_{i}=(q,I^h_i,q'),\#') \step{a, \phi \wedge
	x_{\#'}=0} (P^h_{1}, \ldots , P'^h_{i}=(q,I'^h_i,q''),a)$, when 
	$q' \step{a,\phi} q''$ is in $\Delta$, and $b \in \Sigma$. 
\item For $a=\#$, \\
 $(P^h_{1}, \ldots, P^h_{i}=(q,I^h_i,q'),b) \step{a,\phi \wedge x_{b} \in I'^h_{i+1}}  
	(P^h_{1}, \ldots , P'^h_{i+1}=(q'',I'^h_{i+1}, q''),\#)$, when 
$q' \step{a,\phi} q''$ is in $\Delta$. 
\item For $a=\#'$, \\
    $(P^h_{1}, \ldots , P^h_{i}=(q,I^h_i,q'),a) \step{a,\phi,x_{\#'} \in
	\hat{I}^h_1}  
	(\hat{P}^h_{1}=(q',\hat{I}^h_1,q'),\#')$, when 
	$q' \step{a,\phi} q''$ is in $\Delta$. 
\end{enumerate}}

Given a timed word $w$ accepted by $A$, when $A$ is restricted to $A^h$ then it is
    running on $w'^h$, the projection of $w$ on $\Sigma^h$,
\ram{interspersed with $\#$ separating the maximal $h$-contexts in
original} word $w$.  
Let $v_1,v_2, \ldots, v_m$ be the sequence of switching vectors
witnessed by $A^h$ while reading $w'^h$. 

    Now when $w'^h$ is fed to the constructed machine $A^k_h$, 
    it is interspersed with new symbols $\#'$ whenever the stack is
    empty just after a return symbol is read. 
    \ram{Now $\bar{w}^{h}$ thus constructed is again a collection of
$z$ stack-$h$  }
    contexts which possibly are more in number than in $w'^h$. 
    And each newly created context is either equal to some context of
    $w'^h$ or is embedded in exactly one context of $w'^h$.  
    These give rise to sequence of switching vectors 
    $v'_1, v'_2, \ldots, v'_z$, where \ram{$m \leq z$}. That explains
    the embedding of switching vectors witnessed by $A^k_h$, while
    reading $\bar{w}^{h}$, into switching vectors of $A$, while
reading \ram{$w^h$. }
\qed
\end{proof} 

Let $w$ be in $L(A)$. Then as described above we can have a sequence of
switching vectors $\barbbv_h$ for stack-$h$ machine $A^h_k$.  
Let $d^h$ be the number of $h$-contexts in the $k$-scoped splitting of
$w$ i.e., the number of $h$-contexts in $\barw^h$.  
Then we have those many tuples in the sequence of \ram{switching vectors} $\barbbv^h$. 
Therefore, $\barbbv^h = \Pi_{y \in \{1,\ldots,d_h\}}  \langle l^h_y, I^h_y,l'^h_y \rangle$.
 
We define the relation between elements of $\barbbv^h$ across all such sequences.
While reading the word $w$, for all $h$ and $h'$ in $\{1,\ldots,n\}$ and 
for some $y$ in $\{1,\ldots, d_h\}$ and some $y'$ in $\{1,\ldots, d_{h'}\}$ 
we define a relation $\emph{follows}(h,y)=(h'y')$ if $y$-th $h$-context is
followed by $y'$-th $h'$-context.  

\ram{A collection of correct sequences of context switches given via switching vectors}
$(\barbbv^1,\ldots,\barbbv^n)$ is called 
\concept{globally correct} 
if we can stitch together runs of all $A^h_k$s on $\barw^h$ using these switching vectors 
to get a run of $A$ on word $w$.

In the reverse direction, if for a given $k$-scoped word $w$ over
$\Sigma$ which is in $L(A)$ then we have,  collection of globally
correct switching vectors $(\barbbv^1,\ldots,\barbbv^n)$.

The following lemma enables us to construct a run of $k$-scope
\ecmvpa{} on word $w$ over $\Sigma$ from the runs of \ecvpa{} $A^j_k$
on $h$-projection of $w$ over $\Sigma^j$ with the help of switching vectors. 
 

%
%
\begin{lemma}[Stitching Lemma]
Let $A=(k, L, \AlphabetSet, \Gamma, L^0, F, \Delta)$ be a $k$-scope
\dtecmvpa{}.
Let $w$ be a $k$-scoped word over $\Sigma$.  
    Then $w \in L(A)$ iff there exist a collection of globally correct
    sequences of switching vectors for word $w$.
    \label{lab:lm:stitching}
\end{lemma}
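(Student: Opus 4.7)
The plan is to prove both directions by carefully unpacking the definition of ``globally correct'' and using the Switching Lemma (Lemma~\ref{switch}) as a black box to convert between runs of the single-stack restrictions $A^h_k$ on $\bar{w}^h$ and runs of the full machine $A$ on $w$.

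For the forward direction ($\Rightarrow$), suppose $w \in L(A)$ and fix an accepting run $\rho$ of $A$ on $w$. I would walk along $\rho$ and, for each stack $h$, mark the maximal $h$-contexts in the order they appear. For the $y$-th such $h$-context in $\rho$, read off from $\rho$ the location $\ell^h_y$ where $A$ enters that context, the location $\ell'^h_y$ where it leaves, and the interval $I^h_y \in \calI$ containing the time elapsed between the end of the previous $h$-context and the start of this one. This yields a tuple $\barbbv^h$ for each $h$. By the Switching Lemma these tuples are realized by runs of $A^h_k$ on $\bar{w}^h$, and by construction they can be stitched back to give $\rho$, so $(\barbbv^1,\ldots,\barbbv^n)$ is globally correct. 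The \emph{follows} relation is simply read off the order of contexts along $\rho$.

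For the backward direction ($\Leftarrow$), suppose we are given globally correct switching vectors $(\barbbv^1,\ldots,\barbbv^n)$ for $w$. Applying the Switching Lemma in the other direction, for each $h$ I obtain a run $\rho^h$ of $A^h_k$ on $\bar{w}^h$ whose context-switches are witnessed by $\barbbv^h$. The goal is to fuse $\rho^1,\ldots,\rho^n$ into a single run $\rho$ of $A$ on $w$. Using the \emph{follows} relation I interleave the per-stack runs in the order dictated by $w$; at the boundary of a context switch from stack $h$ to stack $h'$, global correctness guarantees that the exit location $\ell'^h_y$ of the previous $h$-context equals the entry location of $A$ at the start of the next context (which is the entry of the $h'$-context), and that the interval recorded in $\barbbv^{h'}$ at that switch is consistent with the actual elapsed time in $w$. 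Because symbols in $\Sigma^h$ only touch stack $h$ (visibility), the stack contents of $h' \neq h$ are unchanged during the $h$-context and only need to have their ages incremented by the context duration; the event-clock valuations are a function of $w$ alone and hence agree across the piecewise runs. Acceptance transfers because the final location reached is the final location of the last $A^h_k$ run, which lies in $F$.

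The main obstacle is the timing bookkeeping at context boundaries: one has to check that the age updates on the inactive stacks of $A$ (which accumulate over the intervening contexts handled by other stacks) are consistent with the pop-interval constraints enforced by $A^h_k$ when stack $h$ is next resumed. This is where the intervals $I^h_y$ in the switching vectors earn their keep—global correctness precisely asserts that the time elapsed in $w$ between consecutive $h$-contexts falls in $I^h_y$, so every pop-age tested inside $A^h_k$ (which uses the $\#$-delayed timestamps in $\bar{\kappa}^h$) matches the real age a stack symbol would have in the reconstructed $\rho$. Once this alignment is established, stitching is mechanical: concatenate the per-context transitions in $w$-order, carry over stack contents as maintained by the respective $A^h_k$, and update ages of idle stacks by the elapsed times.
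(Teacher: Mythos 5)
Your proposal is correct and follows essentially the same route as the paper: both directions are handled by decomposing an accepting run of $A$ into per-stack runs of the $A^h_k$ (reading off entry/exit locations and elapsed-time intervals at context boundaries) and, conversely, fusing the per-stack runs back together using the location-continuity and interval conditions packaged in ``globally correct,'' with visibility guaranteeing that idle stacks only age. The paper phrases the backward direction as running a composite machine over the $\#$-annotated word while you fuse the runs directly, but this is a presentational difference, and your explicit treatment of the forward direction and of the age bookkeeping at switches only fills in steps the paper declares easy.
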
  
\begin{proof}        
$(\Rightarrow)$: Assuming $w \in L(A)$ to prove existence of a collection
of globally correct sequences is easy.  
 
$(\Leftarrow)$:  Assume that we have a collection of globally
correct switching vectors $(\barbbv^1,\ldots,\barbbv^n)$ of $A$, for a
word $w$. 
    For each $h$ in $\{1,\ldots,n\}$  we have $k$-scoped splitting of word
$w$ and we have a run of $A^h_k$ on $\barw^h$, which uses
$\barbbv^h$, the corresponding switching vector of $A$.  
 
    Let $\barw$ be the word over $\Sigma \cup \{\#^1,\ldots,\#^n\} \cup 
    \ram{\{\#'^1,\ldots, \#'^n\}}$, obtained from $w$ by using
    $k$-splitting as follows. 
    Let $w= w_1 w_2 \ldots w_d$
    We first insert $\#^j$ before $w_1$ if $w_1$ is a $j$-context. Then 
    for all $i$ in $\{1,\ldots,d\}$, insert $\#^l$ in between $w_i$ and
    \ram{$w_{i+1}$} if $w_{i+1}$ is $l$-th context. Let us insert special symbol
    $\#^{\$}$ to mark the end of word $w$. 
    Now using $k$-splitting of word $w$ we throw in $\#'^h$ as done in the
    case of obtaining $\barw^h$ from $w^h$. 

%
\ram{Now we build a composite machine whose constituents are $A^h_k$ for all 
$h$ in $\{1,\ldots,n\}$. Its initial state is  
$(p^1,\ldots, p^j,\ldots, p^n,j)$ where $p^j$ s are the initial states
of $A^j_k$s and the last component tells which component is processing
current symbol. According to this, initially we are processing first $j$-context
in $\barw^j$.}
 
We first run $A^j_k$ on $\barw^j$ updating location $p^j$ to $p'^j$ where  
$p'^j= (\barv^j=(l^j_1,I^j_1,l'^j_1),a)$. When it reads $\#^g$ then 
$p'^j= (\barv^j=(l^j_1,I^j_1,l'^j_1),\#^g)$ and the composite
state is changed to $(p^1,\ldots, p'^j,\ldots, p^n,g)$  meaning that  
next context belongs to $g$-th stack.  
 \ram{So we start simulating $A$ on first $g$-context in $\barw^g$}.  
But to do that $A^j_k$ should have left us in a state where
$A^g_k$ can begin. That is if $p^g= (\barv^g=(l^g_1,I^g_1,l'^g_1),\#^g)$   
    we should have $l^g_1=l'^j_1$ and $x_{\#'^g}=0$ which will be the
    case as this is the first time we accessing $g$-th stack. 
 
In general, if we are leaving a \ram{$j$-context} and the next context
belongs to $g$-th stack then the time elapsed from reading the last symbol of  
last \ram{$g$-context} should falls in the interval of the  
\ram{first} element of next switching vector processing \ram{next
$g$-context} in  $\barw^g$, along with the matching of  
state in which the previous context has left us in. 

\end{proof}  

\section{Emptiness checking and Determinizability of scope-bounded ECMVPA}
\label{lab:sec:det-kecmvpa}

First we show that emptiess problem is decidable using the ideas from
\cite{BDKPT16}. Fix a $k \in \Nat$. 
 
 
\begin{theorem}      
\label{lab:tm:empt-kecmvpa}
    Emptiness checking for $k$-scope \ecmvpa{} is decidable.
\end{theorem}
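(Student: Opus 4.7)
The plan is to reduce emptiness of the $k$-scope \ecmvpa{} $A$ to emptiness of single-stack event-clock visibly pushdown automata, which is known to be decidable~\cite{VTO09}. The reduction is driven by the Switching Lemma and the Stitching Lemma already established.

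First, for each stack $h$ I invoke Lemma~\ref{switch} to construct the single-stack \ecvpa{} $A^h_k$ over $\Sigma^h \cup \{\#, \#'\}$. Because $k$ is fixed, every switching-vector segment stored in the control of $A^h_k$ belongs to $(L \times \calI \times L)^{\leq k}$, so each $A^h_k$ has finite state space of size bounded by $|L| \cdot (|L|^2 |\calI|)^k \cdot (|\Sigma|+2)$. By Lemma~\ref{lab:lm:stitching}, $L(A) \neq \emptyset$ iff there exist a timed word $w$ and a globally correct collection $(\barbbv^1,\ldots,\barbbv^n)$ of switching vectors simultaneously witnessed by accepting runs of $A^1_k, \ldots, A^n_k$ on the projections $\barw^1,\ldots,\barw^n$. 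Thus emptiness of $A$ is equivalent to the existence of such a mutually consistent family of witnessing runs.

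Second, I build a synchronized product $B$ of the $A^h_k$'s that checks this existential condition. The control of $B$ carries: the current location of every $A^h_k$; the identity of the currently active stack (determined by the visible alphabet class of the current symbol); and, for each stack $h$, the location at which the previous $h$-context ended, so that when a new $h$-context is entered the run of $A^h_k$ continues from the matching state encoded in the next switching-vector entry. On an input symbol $a \in \Sigma^h$, $B$ advances only $A^h_k$ and touches only stack $h$, while the remaining components and their stacks are left untouched. Context boundaries (the implicit $\#$ and $\#'$ markers) are guessed nondeterministically; the time-elapse conditions between consecutive $h$-contexts are enforced by the event clocks $x_\#$ and $x_{\#'}$ already present in the construction of each $A^h_k$.

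Third, I conclude by reducing to \ecvpa{} emptiness. Because the alphabet is partitioned and only one stack is touched per symbol, for any fixed choice of the finite synchronization bookkeeping the check of whether $B$ has an accepting run decomposes into a conjunction of independent \ecvpa{} emptiness queries, one per stack. Iterating over the finitely many possible values of the bookkeeping and invoking decidability of \ecvpa{} emptiness~\cite{VTO09} therefore decides $L(B) = \emptyset$, which by the two lemmas is equivalent to $L(A) = \emptyset$.

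The main obstacle is engineering the global synchronization so that the independently constructed accepting runs of the $A^h_k$'s really correspond to a single run of $A$ on one timed word $w$, rather than to unrelated runs on different words. The $k$-scope restriction is indispensable at exactly this point: it caps each switching-vector fragment at length $k$, which in turn caps the amount of data that must be tracked across the dormant periods of each stack, keeping the control of $B$ finite-state and the whole reduction effective.
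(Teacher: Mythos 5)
Your route is genuinely different from the paper's, and it has a gap at the recombination step. The paper's proof is a two-line reduction: it observes that the emptiness argument for $k$-round \ecmvpa{} in \cite{BDKPT16} never uses the round bound itself --- one applies the standard region construction of \cite{AFH99} to eliminate the event clocks, obtaining an untimed $k$-scope \MVPA{}, whose emptiness is decidable by the scope-bounded reachability results of~\cite{latin10}. No decomposition into single-stack machines and no stitching is needed for emptiness; in this paper those tools are reserved for determinization.

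The gap in your argument is in the third step. The synchronized product $B$ is still an $n$-stack machine, and emptiness of multistack automata is undecidable in general, so everything rests on the claim that the check ``decomposes into a conjunction of independent \ecvpa{} emptiness queries.'' That claim does not hold as stated: the object you would have to fix before the per-stack queries become independent is not the finite control bookkeeping but the entire globally correct collection $(\barbbv^1,\ldots,\barbbv^n)$ together with the order of context switches, and the number of contexts in a $k$-scoped word is unbounded (only each switching-vector segment is bounded by $k$). Hence ``iterating over the finitely many possible values of the bookkeeping'' does not enumerate the witnesses; one would need a saturation or summary computation over the finite control of $B$, or --- as the paper does --- a reduction to untimed scope-bounded \MVPA{} reachability. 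A second, related issue is global timing consistency: the clocks $x_{\#}$ and $x_{\#'}$ of each $A^h_k$ only constrain the projection $\barw^h$, and nonemptiness of each $A^h_k$ separately does not guarantee that the interval guesses of the $n$ components are simultaneously realizable by a single monotone global time line. The Stitching Lemma defines global correctness semantically (``if we can stitch together runs \ldots to get a run of $A$''), so it does not by itself supply the effective criterion your reduction needs at exactly the point you identify as the main obstacle.
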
 
\begin{proof}[Proof sketch]
    Decidability of emptiness checking of $k$-round \ecmvpa{} has been shown
    in \cite{BDKPT16}. This proof works for any general \ecmvpa{} as the
notion $k$-round has not been used. 
So, the same proof can be used to decide emptiness checking of
    $k$-scope \ecmvpa{}.  
\end{proof} 
 

Rest of the section is devoted for the proof of following theorem.  
 
\begin{theorem}      
\label{lab:tm:det-kecmvpa}
    The class of $k$-scope \ecmvpa{} are determinizable. 
\end{theorem}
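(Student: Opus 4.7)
The plan is to reduce determinization of a $k$-scope $\ecmvpa{}$ to determinization of single-stack $\ecvpa{}$ via the Switching and Stitching Lemmas. Given a nondeterministic $k$-scope $\ecmvpa{}$ $A$, first apply Lemma~\ref{switch} to obtain, for each stack $h \in \{1, \ldots, n\}$, a single-stack $\ecvpa{}$ $A^h_k$ over $\Sigma^h \cup \{\#, \#'\}$ whose accepting runs over $\bar{w}^h$ correspond precisely to correct sequences of context switches of $A$ on stack $h$. Since $\ecvpa{}$ is known to be determinizable (combining the \textsf{VPA} determinization of Alur--Madhusudan with the event-clock determinization of Alur--Fix--Henzinger, as in Tang--Ogawa), each $A^h_k$ can be determinized to an equivalent deterministic $\ecvpa{}$ $D^h_k$. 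Because the theorem concerns $\ecmvpa{}$ rather than $\dtecmvpa{}$, the stacks are untimed, so no age bookkeeping on stack symbols is needed and the determinization of each $A^h_k$ reduces to the ordinary $\ecvpa{}$ case.

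Next I would build a synchronous product automaton $D$ whose state records the current state of each $D^h_k$, with the event clocks of $\Sigma$ shared across the product. Because the alphabet is visibly partitioned by stack, on reading an input symbol $a \in \Sigma^h$ the product $D$ deterministically routes $a$ to the $h$-th component, updates the state of $D^h_k$, and performs the corresponding call/return/internal operation on stack $h$ alone; all other $D^{h'}_k$ remain idle. Whenever consecutive input symbols belong to different stacks (the active stack changes from $h$ to $h'$), $D$ first feeds a virtual $\#$ to $D^h_k$ and then delivers the new symbol to $D^{h'}_k$; whenever a return on stack $h$ empties that stack, $D$ additionally feeds a virtual $\#'$ to $D^h_k$. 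Since both virtual markers are deterministic functions of the visible stack structure and of stack-emptiness (which is directly observable in the configuration), no nondeterminism is introduced. Acceptance is defined by requiring that, after a terminal $\#$ is fed to each component, every $D^h_k$ sits in an accepting state; by the Stitching Lemma this is equivalent to the existence of globally correct switching vectors, hence to $w \in L(A)$.

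The hard part, I expect, is ensuring that the switching vectors picked independently inside the different $D^h_k$ can actually be stitched into one run of $A$. The subset construction embedded in each $D^h_k$ already tracks all locally consistent switching vectors; to enforce global consistency inside $D$, I would augment the state with the bookkeeping needed to match, at every context switch from $h$ to $h'$, the final-state component of the last triple of the current $h$-switching vector against the initial-state component of the first triple of the incoming $h'$-switching vector, together with the interval $I^h_i$ recording the inter-context time elapse between two successive appearances of stack $h$. Because every switching vector has length at most $k$ by $k$-scopedness, and both $L$ and $\calI$ are finite, this bookkeeping lives in a finite state space, and visibility of the alphabet preserves determinism throughout. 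A secondary subtlety is the treatment of the recorder/predictor clocks $x_\#, y_\#, x_{\#'}, y_{\#'}$ used inside each $A^h_k$ via Lemma~\ref{switch}: since $\#$ and $\#'$ never appear in the true input, their valuations must be inferred in $D$ from the timestamps of the genuine symbols flanking each context boundary, which again is a deterministic function of the visible structure of $w$ and of the configuration reached by $D$.
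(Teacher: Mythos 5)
Your proposal is correct and follows essentially the same route as the paper: decompose the $k$-scope \ecmvpa{} into single-stack \ecvpa{}s $A^h_k$ via the Switching Lemma, determinize each one using the Tang--Ogawa untiming plus Alur--Madhusudan \VPA{} subset construction, and then run the deterministic components in a composite automaton that activates the component of the currently visible stack, feeds virtual $\#$/$\#'$ markers at context boundaries and stack-emptiness points, and enforces global consistency by matching endpoints and inter-context intervals of consecutive switching vectors as guaranteed by the Stitching Lemma. The bookkeeping you flag as the ``hard part'' is exactly what the paper's construction of $B$ carries in its locations $(q_1,\dots,q_n,h)$, so no gap remains.
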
 
 
To show this we use the determinization of VPA \cite{AM04vpda}
and we recall this construction here for the reader's convenience.
\subsubsection*{Determinization of VPA \cite{AM04vpda}}
\label{vpa-det}
 
Given a VPA $M=(Q, Q_{in}, \Gamma, \delta, Q_F)$, the idea in \cite{AM04vpda} is to do a subset construction. 
Let $w=w_1a_1w_2a_2w_3$ be a word such that every call in $w_1, w_2, w_3$ has a matching return, and 
$a_1, a_2, a_3$ are call symbols without matching returns. After reading $w$, the deterministic VPA has  
stack contents $(S_2,R_2,a_2)(S_1,R_1,a_1)\bot$ and is in control state $(S,R)$.  
Here, $S_2$ contains all pairs of states $(q,q')$ such that starting with $q$ on $w_2$ and  
an empty stack (contains only $\bot$), we reach $q'$ with stack
$\bot$. The set of pairs of states $S_2$ is called a summary for $w_2$. Likewise, $S_1$ is a summary for $w_1$ 
and $S$ is the summary for $w_3$. Here $R_i$ is the set of states reachable from the initial state 
after reading till the end of $w_i$, $i=1,2$ and $R$ is the set of reachable states obtained on reading $w$. 

After $w_3$, if a call $a_3$ occurs, then $(S,R,a_3)$ is pushed on the stack, and the current state 
is $(S',R')$ where $S'=\{(q,q)\mid q \in Q\}$, while $R'$ is obtained by updating  $R$  using all transitions  
for $a_3$. 
The current control state $(S,R)$ is updated to $(S', R')$ where $R'$
in the case of call and internal symbols is the set of all reachable states obtained 
from $R$, using all possible transitions on the current symbol read,
and where the set $S'$ is obtained as follows:
\begin{itemize}
\item On reading an internal symbol, $S$ evolves into $S'$ where $S'=\{(q,q') \mid \exists q'', (q, q'') \in S, (q'', a, q') \in \delta\}$.
\item On reading a call symbol $a$, $(S, R,a)$ is pushed onto the stack, and the control state is $(S', R')$ where 
$S'=\{(q,q) \mid q \in Q\}$. On each call, $S'$ is re-initialized. 
 \item On reading a return symbol $a'$, let the top of stack be $(S_1, R_1, a)$. This is popped. 
 Thus, $a$ and $a'$ are a matching call-return pair. Let 
the string read so far be $waw'a'$. Clearly, $w, w'$ are well-nested, or all calls in them have seen their returns.   
 
For the well-nested string $w$ preceding $a$, we have  $S_1$ consisting of  
all $(q,q'')$ such that starting on $q$ on $w$, we reach $q''$ with empty stack.   
Also, $S$ consists of pairs $(q_1,q_2)$ that have been obtained since the call  
symbol $a$ (corresponding to the return symbol $a'$) was pushed onto the stack.  
The set $S$ started out as $\{(q_1,q_1) \mid q_1 \in Q\}$ on pushing $a$, and contains  
pairs $(q_1,q_2)$ such that on reading the well-nested string between $a$ and $a'$,  
starting in $q_1$, we reach $q_2$. The set $S$ is updated to $S'$ by ``stitching''  
$S_1$ and $S$ as follows:  
 
A pair $(q,q') \in S'$ if  there is some  $(q, q'') \in S_1$, 
and $(q'',a,q_1,\gamma) \in \delta$ (the push transition  
on $a$), $(q_1, q_2) \in S$, and $(q_2, a', \gamma, q')\in \delta$ (the pop transition on $a'$).  

In this case, a state $q' \in R'$ if there exists some location $q$ in
$R_1$  and there exists and $(q,a,q_1,\gamma) \in \delta$ (the push transition  
on $a$), $(q_1, q_2) \in S'_1$, and $(q_2, a', \gamma, q')\in \delta$
(the pop transition on $a'$).  The important thing is the reachable
set of states of non-deterministic machine after the call transition is  
updated using all possible summaries of well-matched words possible after  
the corresponding push transition in the past. 
\end{itemize}
 
The set of final locations of the determinized VPA are $\{(S,R) \mid R$  
contains a final state of the starting VPA$\}$, and its initial location  
is the set of all pairs $(S_{in}, R_{in})$ where $S_{in}=\{(q,q) \mid q \in Q\}$  
and $R_{in}$ is the set of all initial states of the starting VPA.

\subsubsection*{Determinization of $k$-scope ECMVPA} 
In this section we show that $k$-scope \ecmvpa{} are determinizable using
the result from \cite{VTO09} that single stack \ecvpa{} are
determinizable.  

Let $A=(k, L, \AlphabetSet, \Gamma, L^0, F, \Delta)$ be the $k$-scoped \ecmvpa{} and
let $A^h_k$ be the \ECVPA{} on $\Sigma^h \cup \{\#^h, \#'^h\}$. Each $A^h_k$ is  
determinizable \cite{VTO09}. Recall from~\cite{VTO09} that an \ECVPA{} $A^h_k$  
is untimed to obtain a \VPA{} $ut(A^h_k)$ by encoding the clock constraints of  
$A^h_k$ in an extended alphabet. This \VPA{} can be converted back into an  
\ECVPA{} $ec(ut(A^h_k))$ by using the original alphabet, and replacing the  
clock constraints. This construction is such that $L(ec(ut(A^h_k)))=L(A^h_k)$  
and both steps involved preserve determinism. Determinization of \VPA{} $ut(A^h_k)$  
is done in the usual way \cite{AM04vpda}. This gives $Det(ut(A^h_k))$.  
Again, $ec(Det(ut(A^h_k)))$ converts this back into a \ECVPA{} by simplifying the  
alphabet, and writing the clock constraints. The set of locations remain unchanged  
in $ec(Det(ut(A^h_k)))$ and $Det(ut(A^h_k))$. This translation also preserves determinism,  
hence $B^h_k=ec(Det(ut(A^h_k)))$ is a deterministic \ECVPA{} language equivalent  
to \ECVPA{} $A^h_k$. 
This process is also explained in the Figure~\ref{fig:proof-dia}.

\begin{figure} 
\begin{center}\scalebox{0.75}{
      \begin{tikzpicture}
 
    \tikzstyle{every node}=[draw=none,fill=none];


    \node (A0) at (0,4) {nondeterministic $k$-scoped \ecmvpa};
    \node (B0) at (0,-2) {deterministic $k$-scoped \ecmvpa};
	  \node (A) at (4,4) {$A$};
    \node (A1) at (0,2) {$A_1$};
    \node (A10) at (-3,2) {nondeterministic \ecvpa s};
    \node (Aj) at (4,2) {$A_j$};
    \node (An) at (8,2) {$A_n$};

    \node (B1) at (0,0) {$B_1$};
    \node (B10) at (-3,0) {deterministic \ecvpa s};
    \node (Bj) at (4,0) {$B_j$};
    \node (Bn) at (8,0) {$B_n$};
     
    \node (B) at (4,-2) {$B=\langle B_1,\ldots,B_n\rangle$};

   \draw[->,black,rounded corners] (A)--(A1); 
   \draw[->,black,rounded corners] (A)--(Aj); 
   \draw[->,black,rounded corners] (A)--(An);

   \draw[->,black,rounded corners] (A1)--(B1); 
   \draw[->,black,rounded corners] (Aj)--(Bj); 
   \draw[->,black,rounded corners] (An)--(Bn);

   \draw[->,black,rounded corners] (B1)--(B); 
   \draw[->,black,rounded corners] (Bj)--(B); 
   \draw[->,black,rounded corners] (Bn)--(B);

    \tikzstyle{every node}=[draw=none,fill=none];
   \node (decompose) at (7.3,3) {$decompose$};
   \node (determinize) at (5,1) {$determinize$};
   \node (product) at (7,-1) {$product$};
\end{tikzpicture}
} 
\end{center} 
\caption{Determinazation of $k$-scoped \ecmvpa}
\label{fig:proof-dia}
\end{figure}

The locations of $B^h_k$ are thus of the form 
$(S,R)$ where $R$ is the set of all reachable control states of 
$A^h_k$ and $S$ is a set of ordered pairs of states of $A^h_k$ as seen in section \ref{vpa-det}. 
On reading $\kappa_j$, the $R$ component of the state reached in
$B^h_k$ is the set  $\{\langle V^h\rangle \mid V^h ~\mbox{is a last
component of switching sequence}~ \barbbv^h$ of $A^h_k\}$. 
Lemmas \ref{switch} and Lemma \ref{lab:lm:stitching} follow easily  using 
  $B^h_k=ec(Det(ut(A^h_k)))$ in place of $A^h_k$. 
We now obtain a deterministic $k$-scoped \ecmvpa{} $B$ for language of
$k$-scoped \ECMVPA{}
$A$ by simulating $B^1_k, \dots, B^k_k$ one after the other on reading
$w$, with the help of globally correct sequences of k-scoped switching
vectors of $A^h_k$ s. 

Automaton $B$ in its finite control, keeps track of the locations of all
the $B^h_k$'s, along with the valuations of all the recorders and predictors of $\Sigma$.   
It also remembers the current context number of the $B^h_k$'s to
ensure correct simulation.

Let $B^h_k=(Q^h, \Sigma^h \cup\{\#,\#'\},\Gamma^h, Q^h_0, F^h, \delta^h)$. 
Locations of $B^h_k$ have the form $(S^h, R^h)$. The initial state of $B^h_k$  
is the set consisting of all $(S_{in}, R_{in})$ where $S_{in}=\{(q,q) \mid q$  
is a state of $A^h_k\}$, and $R_{in}$ is the set of all initial states
of $A^h_k$.  
Recall that a final state of $A^h_k$ is $V^h$ the last component
switching vector $V^h$ of a correct switching sequence  
$\barbbv^h$ of $A^h_k$. Thus, an accepting run in $B^h_k$ goes through states 
$(S_{in}, R_{in}), (S_1, R_1) \dots, (S_m, R_m),  \langle V^h
\rangle$.  

Locations of $B$ have the form $(q_1, \dots, q_n, h)$ where $q_y$ is a  
location of $B_y$, $h$ is the stack number of current context.  
Without loss of generality we assume that the first context in any
word belongs to stack $1$. The initial location of $B$ is  
$(q_{11}, q_{12}, \dots, q_{1n},1)$ where $q_{1h}$ is the initial  
location of $B^h_k$. We define the set of final locations of $B$ to be  
$(\langle V^1 \rangle, \ldots \langle V^{n-1} \rangle \langle S_n,R_n \rangle)$  
where $R_n$ is  a set containing a tuple of the form $(k,l'_{kn}, V_n,a)$ and
$ l'_{kn}$ is in $F$, the set of final locations of $A$.

We now explain the transitions $\Delta$ of $B$, using the transitions
$\delta^j$ of $B^h_k$.  
Recall that $B$ processes $w=w_1w_2 \dots w_m$ where each $w_i$ is
context of some stack.
When $w$ is $k$-scoped we can see $w$ as the concatenation of contexts 
$w=\barw_1\barw_2\ldots\barw_{\barm}$
where consecutive contexts need not belong to different stacks, 
and $\barm \geq m$. 
Let $\eta=(q_{1}, \dots,  q_{h-1}, q_{h}, q_{h+1}, \dots, q_{n},h)$
and let \\
$\zeta= (q_{1}, \dots,  q_{h-1}, q, q_{h+1}, \dots, q_{n},h)$,  
where $q_{h}$ is some state of $B^h_k$ while processing some context
of $h$-th stack.

\begin{enumerate}
\item Simulation of $B^h_k$ when the  $\barw_{\barm}$ 
is not an $h$-context. 

\begin{itemize}
\item $\langle \eta, a, \varphi, \zeta \rangle \in\Delta^j_{l}$ iff
$(q_{ij},a, \varphi, q) \in \delta_{l}^j$
 \item $\langle \eta, a, \varphi, \zeta, \gamma \rangle \in \Delta^j_c$ iff  $(q_{ij},a, \varphi,\gamma,  q) \in \delta_{c}^j$
 \item $\langle \eta, a, \gamma, \varphi, \zeta \rangle \in \Delta^j_r$ iff  $(q_{ij},a, \gamma, \varphi, q) \in \delta_{r}^j$
\end{itemize}
\item Change context from $h$ to $h+1$. 
Assume that we have processed $h$-context $\barw_g$ and next context is
$\barw_{g+1}$ which is an $(h+1)$-context.  
Let the current state of $B$ is   
$\eta=(q_{1}, \dots,  q_{h-1}, q_{h}, q_{h+1}, \dots, q_{n},h)$ 
after processing $\barw_g$. 
At this point $B^h_k$ reads symbol $\#$ in $\barkappa^h$ and moves from state $q_{h}$ to $q'_{h}$. 
Therefore, current state of $B$ becomes 
$\eta'=(q_{1}, \dots,  q_{h-1}, q'_{h}, q_{h+1}, \dots, q_{n},h)$ 
	At this point, $B$ invokes $B^{h+1}_k$ and starts reading first
	symbol of $\barw_{g+1}$ when $B$ is in state $\eta'$. 
	This is meaningful because of Lemma~\ref{lab:lm:stitching}. States of
	$q_{h+1}$ are of the form $(S^{h+1}, R^{h+1})$ where 
        $R^{h+1}$ have states of $A^{h+1}_k$ of the form
	$(\barbbv^{h+1},a)$, 
	which is a possible state after processing last $h+1$ context. 
        Globally correct stitching sequence guarantees that for all 
	$(\barbbv^h,a) \in R'^h$ we have at least one
	$(\barbbv^{h+1},a) \in R'^{h+1}$ such that 
	$\barbbv'^h[last]=(x,I,y)$ then 
	$\barbbv'^{h+1}[first]=(y,I',z)$ and valuation of clocks at
        $\eta'$ belongs to $I'$, where $x,y$ are locations of $A$, and
\ram{where last is the last component of $\barbbv'^h$ after processing
$\barw_g$, and first is the first componet of switching vector
$\barbbv'^{h+1}$ for processing $\barbbv'{h+1}$.}
Component location $q_{h+1}$ will be replaced based on a transition 
of $B^{h+1}_k$, and $q'^{h}$ is also replaced with $q^{h}$ to take care of the transition 
on $\#$ in $B^h_k$, where $q^{h}=(S^{h}, R^{h})$ with $R^{h}$ containing all locations  
of the form $(\barbbv^h, a)$, where $a$ is the last symbol of
	$\barw_{g}$.  

\begin{itemize}
\item $\langle (\dots,q'_{h},q_{h+1}, \dots,h), a, \varphi,
(\dots,q_{h},q, \dots,h+1) \rangle \in \Delta^{h+1}_{l}$ iff \\
$(q_{h+1}, a, \varphi, q) \in \delta^{h+1}_{l}$
\item $\langle (\dots,q'_{h},q_{h+1}, \dots,h), a, \varphi,  (\dots,q_{h},q, \dots,h+1), \gamma \rangle \in \Delta^{h+1}_{c}$ iff \\
$(q_{h+1}, a, \varphi, q, \gamma) \in \delta^{h+1}_{c}$
\item $\langle (\dots,q'_{h},q_{h+1}, \dots,h), a, \gamma, \varphi,  (\dots,q_{h},q, \dots,h+1) \rangle \in \Delta^{h+1}_{r}$ iff \\
$(q_{h+1}, a, \gamma, \varphi, q) \in \delta^{h+1}_{r}$
\end{itemize}
Transitions of $B^{h+1}$ continue on 
$(\dots,q_{h},q, \dots,h+1)$ replacing only the $(h+1)$-th entry 
until $\barw_{g+1}$ is read completely.

\item Reading $\barw_{\barm}$ the last context of word $w$. Without loss of  
    generality assume that this is the context of $B^n_k$ and the previous  
    context $\barw_{\barm-1}$ was the context of $B^{n-1}_k$. 
When $B^{n-1}_k$ stops reading last symbol of $\barw_{\barm}$ then  
it is in the state 
$\eta=(\mathcal{V}^1,
\mathcal{V}^2,\ldots,\mathcal{V}^{n-1},(S'^n,R'^n),n)$
where no more symbols are to be read, and each $B^h_k$ is in the state
$\langle \mathcal{V}^h \rangle$, where $\mathcal{V}^h$ is the last
component of $k$-scoped switching vector $\barbbv^h$,
where $R'_{n}$ is the set of all locations of the form 
 $(\mathcal{V}^n,a)$, where $a$ is the last symbol of $\barbbv^h$. 
    This is accepting iff there exists $\ell'_{kn}$ as the second
    destination of last tuple in $\mathcal{V}^n$  and $\ell'_{kn} \in F$. 
Note that we have ensured the following:
\begin{enumerate}
\item  $\barbbv^h= \prodover_{1 \leq i \leq  m_h} \mathcal{V}^h_i$ and is part of correct  
global sequence for all $1 \leq h \leq n$, and $m_h$ is the number of $k$-switching  
vectors of $A^h_k$ used in the computation.   
 
\item At the end of $\barw_{arm}$, we reach in $B^h_k$, $(S'_{kn},R'_{kn})$  such that 
 there exists $\ell'_{kn}$ as the second
 destination of the last tuple in $\mathcal{V}^n$  and $\ell'_{kn} \in F$. 
\item When switching from one context to another continuity of state
    is used as given in the correct global sequence.  
\end{enumerate}
      \end{enumerate}
 The above conditions ensure correctness of local switching and a globally correct sequence in $A$.  
 Clearly, $w \in L(B)$ iff $w \in L(A)$ iff there is some globally correct sequence 
 $\barbbv^1 \dots \barbbv^n$. 


\section{Emptiness checking and Determinizability of scope-bounded \dtecmvpa{}}
\label{sec2}
Fix a $k \in \Nat$.
\ram{The proof is via untiming stack construction to get $k$-scope $\ecmvpa$ for
which emptiness is shown to be decidable in Theorem~\ref{lab:tm:empt-kecmvpa}}.
 
 We first informally describe the \emph{untiming-the-stack} construction to obtain
from a \ram{$k$-scope-\dtecmvpa{} $M$}
over $\Sigma$, an \ram{$k$-scope-\ECMVPA{}  $M'$} over an extended alphabet $\Sigma'$ such that $L(M)=h(L(M'))$ where $h$ is a homomorphism $h: \Sigma' {\times} \mathbb{R}^{\geq 0}\rightarrow \Sigma {\times} \mathbb{R}^{\geq 0}$
defined as $h(a,t)=(a,t)$ for $a {\in} \Sigma$ and $h(a,t)=\epsilon$ for $a {\notin} \Sigma$. 
 
Our construction builds upon that of~\cite{BDKRT15}. 

Let $\kappa$ be the maximum constant used in the
\ram{$k$-scope-\dtecmvpa{} $M$} while checking the age of a popped symbol
in any of the stacks. 
Let us first consider a call transition 
$(l, a, \varphi, l', \gamma) \in \Delta^i_c$ encountered in $M$.
To construct an \ram{$k$-scope-\ECMVPA{} $M'$} from $M$, we guess the interval used in the 
return transition when $\gamma$ is popped from $i$th stack. Assume the
guess is an interval of the form  
$[0,\kappa]$. This amounts to checking that the age of $\gamma$ at the time of popping is ${<}\kappa$.
In $M'$, the control switches from $l$ to a special location $(l'_{a, {<}\kappa}, \{{{<_i}\kappa}\})$, 
and the symbol $(\gamma, {{<}\kappa}, \first)$\footnote{It is sufficient to push $(\gamma, {{<}\kappa}, \first)$ 
 in stack $i$, since the stack number is known as $i$}
 is pushed onto the $i$th stack. 

Let $Z_i^{\sim}=\{\sim_i c \mid c {\in} \mathbb{N}, c \leq k, \sim {\in} \{{<}, \leq, >, \geq\}\}$. 
Let $\Sigma'_i=\Sigma^i \cup Z_i^{\sim}$ be the extended alphabet for transitions on the $i$th stack.
All symbols of $Z_i^{\sim}$ are internal symbols in $M'$ i.e.
$\Sigma'_i=\set{\Sigma^i_c, \Sigma^i_{l}\cup  Z_i^{\sim}, \Sigma^i_r}$. 
At location $(l'_{a,{<}\kappa}, \{{{<_i}\kappa}\})$,  the new symbol ${{<_i}\kappa}$
is read and we have the following transition :
$((l'_{a,{<}\kappa}, \{{{<_i}\kappa}\}), {{<_i}\kappa}, x_a=0,
(l', \{{{<_i}\kappa}\}))$, which  results in resetting the event recorder
$x_{{<_i}\kappa}$ corresponding to the new symbol ${{<_i}\kappa}$.  
The constraint $x_a=0$ ensures that no time is elapsed by the new transition. 
The information ${{<_i}\kappa}$ is retained in the control state until
$(\gamma, {{<}\kappa}, \first)$  is popped from $i$th stack. 
At $(l', \{{{<_i}\kappa}\}))$, we continue the simulation of $M$ from $l'$. 
Assume that we have another push operation on $i$th stack at $l'$ of the form 
$(l',b, \psi, q, \beta)$. In $M'$, from $(l',\{{{<_i} \kappa}\})$, we first guess
the constraint that will be checked when $\beta$ will be popped from the $i$th stack. 
If the guessed constraint is again ${<_i}\kappa$, then control switches from $(l',\{{{<_i} \kappa}\})$ 
to $(q,\{{{<_i} \kappa}\})$, and $(\beta,{{<}\kappa},-)$ is pushed onto the $i$th stack
and simulation continues from  $(q,\{{{<_i} \kappa}\})$. However, if the 
guessed pop constraint is ${<_i}\zeta$ for $\zeta \neq \kappa$, then
 control switches from $(l',\{{{<_i} \kappa}\})$ to $(q_{b,{<}\zeta},\{{{<_i}\kappa},{{<_i} \zeta}\})$ on reading $b$. 
 The new obligation ${<_i}\zeta$ is also remembered in the control state. 
 From $(q_{b,{<}\zeta},\{{{<_i}\kappa},{{<_i} \zeta}\})$, we read the new symbol ${<_i}\zeta$ which resets the event 
 recorder $x_{{<_i}\zeta}$ and control switches to $(q, \{{{<_i}\kappa},{{<_i} \zeta}\})$, pushing 
 $(\beta, {{<}\zeta}, \first)$ on to the $i$th stack. The idea thus is to keep the obligation ${{<_i}\kappa}$ alive 
 in the control state until $\gamma$ is popped; 
 the value of $x_{{<_i}\kappa}$ at the time of the pop determines whether
 the pop is successful or not.
 If a further ${{<_i}\kappa}$ constraint is encountered while the obligation ${{<_i}\kappa}$
 is already alive, then we do not reset the event clock $x_{{<_i}\kappa}$. The $x_{{<_i}\kappa}$ is 
 reset only at the next call transition  after $(\gamma, {{<}\kappa},
\first)$ is popped from $i$th stack , when ${{<_i}\kappa}$ is again guessed.   
The case when the guessed popped constraint is of the form ${>_i}\kappa$ is similar.
In this case, each time the guess is made, we reset the event recorder $x_{{>_i}\kappa}$ at the time of the push.
If the age of a symbol pushed later is ${>}\kappa$, so will be the age of a symbol pushed earlier. 
In this case, the obligation ${{>}\kappa}$ is remembered only in the stack and not in the finite control. 
Handling guesses of the form $\geq \zeta \wedge \leq \kappa$ is similar, and we combine the ideas discussed above.  

\Suggestion{Now consider a return transition $(l, a, I, \gamma, \varphi, l')\in \Delta^i_r$ in $M$.} In $M'$, we 
are at some control state $(l,P)$. On reading $a$, we check the top of the $i$th stack symbol in $M'$.
It is of the form $(\gamma, S, \first)$ or 
$(\gamma, S, -)$, where $S$ is a singleton set of the form 
$\{{<}\kappa\}$ or $\{{>}\zeta\}$, or a set of the form 
$\{{<}\kappa, {>}\zeta\}$\footnote{This last case happens when the age checked lies between $\zeta$ and $\kappa$}. Consider the case when the top of the $i$th stack symbol is $(\gamma, \{{<}\kappa, {>}\zeta\}, \first)$. 
In $M'$, on reading $a$, the control switches from $(l,P)$ to $(l',P')$ for $P'=P \backslash \{{<}\kappa\}$ iff
 the guard $\varphi$ evaluates to true,
the interval $I$ is $(\zeta,\kappa)$ (this validates our guess made at the time
of push) and
the value of clock $x_{{<_i}\kappa}$ is ${<} \kappa$, and the value of clock 
$x_{{>_i}\zeta}$ is  ${>}\zeta$.
Note that the third component $\first$ says that there are no symbols
in $i$th stack  
below $(\gamma, \{{<}\kappa, {>}\zeta\}, \first)$ whose pop constraint is
${<}\kappa$.
Hence,  we can remove the obligation ${<_i}\kappa$ from $P$ in the control state. 
If the top of stack symbol was $(\gamma, \{{<}\kappa, {>}\zeta\},-)$, then we know that 
the pop constraint ${<}\kappa$  is still alive for $i$th stack . That is, there is some stack symbol below 
$(\gamma, \{{<}\kappa, {>}\zeta\},-)$ of the form $(\beta, S, \first)$ such that ${<}\kappa {\in} S$.
In this case, we keep $P$ unchanged and control switches to $(l',P)$. 
Processing another $j$th stack  continues exactly  as above; the set $P$
contains $<_i \kappa, \leq_j \eta$, and so on depending on what 
constraints are remembered per stack.  Note that the set $P$ in $(l, P)$ only contains constraints of the form $<_i \kappa$ or $\leq_i \kappa$ for 
 each $i$th stack, since we do not remember $>\zeta$ constraints in the finite control.


We now give the formal construction.  
\subsection*{Reduction from \ram{$k$-scope-\dtecmvpa{} to $k$-scope-\ECMVPA{}:}}

Let $Z^{\sim}=\bigcup_{i=1}^n Z^{\sim}_i$ and let  
$S^{\sim}=\{\sim c \mid c {\in} \mathbb{N}, c \leq \kappa, \sim {\in} \{{<}, \leq, {>}, \geq, =\}\}$.  
Given \ram{$k$-scope-\dtecmvpa{}} $M=(L, \Sigma, \Gamma, L^0, F, \Delta)$ with max constant $\kappa$  
used in return transitions of all stacks, we construct  
\ram{$k$-scope-\ECMVPA{}} $M'=(L', \Sigma', \Gamma', L'^0, F', \Delta')$ where
$L'{=}(L {\times} 2^{Z^{\sim}}) \cup (L_{\Sigma_i {\times} S^{\sim}} {\times}
2^{Z^{\sim}}) \cup (L_{\Sigma_i {\times} S^{\sim}{\times} S^{\sim}} {\times}
2^{Z^{\sim}})$, 
$\Sigma_i'=(\Sigma^i_c, \Sigma^i_{l} \cup Z_i^{\sim}, \Sigma^i_r)$ and $\Gamma_i'=\Gamma_i
{\times} 2^{S^{\sim}} {\times} \{\first, -\}$, 
$L^0=\{(l^0, \emptyset) \mid l^0 {\in} L^0\}$, and
$F=\{(l^f, \emptyset) \mid l^f {\in} F\}$.
 
 
The transitions $\Delta'$ are defined as follows:

\noindent{\emph{Internal Transitions}}.
For every $(l,a,\varphi,l') {\in} \Delta^i_{l}$ we have the set of transitions 
$((l,P),a, \varphi,(l',P)) {\in} {\Delta^i}'_{l}$.
\noindent{\emph{Call Transitions}}.
For every $(l,a,\varphi,l',\gamma) {\in} \Delta^i_c$, we have the following
classes of transitions in $M'$.
\begin{enumerate}
\item
  The first class of transitions corresponds to the guessed pop constraint being
  ${<}\kappa$.
  In the case that, obligation is ${<}\kappa$ is alive in the state, hence there is no  
  need to reset the clock $x_{{<_i}\kappa}$. Otherwise, the obligation ${<}\kappa$ is fresh  
  and hence it is remembered as $\first$ in the $i$th stack, and the clock $x_{{<_i}\kappa}$ is reset.
  \begin{eqnarray*}
    ((l,P), a, \varphi, (l', P), (\gamma,\{{<}\kappa\},-)) {\in} {\Delta^i}'_c && \: \text{if} \:{{<_i}\kappa} {\in} P \\
    ((l,P), a, \varphi, (l'_{a,{{<}\kappa}}, P'), (\gamma,\{{<}\kappa\},\first))
    {\in} {\Delta^i}'_c  && \:\text{if}\: {{<_i}\kappa} {\notin} P \:\text{and}\: P'=P \cup \{{{<_i}\kappa}\}\\
    ((l'_{a,{{<}\kappa}}, P'), {{<_i}\kappa}, x_a=0, (l', P')) {\in}
{\Delta^i}'_{l}
  \end{eqnarray*}
\item
The second class of transitions correspond to the guessed pop  
constraint ${>}\kappa$. The clock $x_{{>_i}\kappa}$ is reset, and  
obligation is stored in $i$th stack.
 \begin{eqnarray*} 
   ((l,P), a, \varphi, (l'_{a,{{>}\kappa}}, P), (\gamma,\{{>}\kappa\},-)) {\in}
   {\Delta^i}'_c  &\text{and} &
   ((l'_{a,{{>}\kappa}}, P), {{>_i}\kappa}, x_a{=}0, (l', P)) {\in}
{\Delta^i}'_{l} 
 \end{eqnarray*}
\item Finally the following transitions consider the case when the 
guessed pop constraint is ${>}\zeta$ and ${<}\kappa$.  
Depending on whether ${<}\kappa$ is alive or not, we have two cases. If alive, then we simply reset the clock $x_{{>_i}\zeta}$  
and remember both the obligations in $i$th stack.   
If ${<}\kappa$ is fresh, then we reset both clocks  
$x_{{>_i}\zeta}$ and $x_{{<_i}\kappa}$ and remember both
obligations in  $i$th stack , and ${<_i}\kappa$ in the state. 
 \begin{eqnarray*}  
    ((l,P), a, \varphi, (l'_{a,{<}\kappa,{>}\zeta}, P'), (\gamma,\{{<}\kappa,
    {>}\zeta\},\first)) {\in} {\Delta^i}'_c && \text{if}\: {{<_i}\kappa} {\notin} P, P'{=}P \cup \{{<_i}\kappa\}\\
    ((l'_{a,{<}\kappa,{>}\zeta},P'), {>_i}\zeta,x_a=0, (l'_{a,{<}\kappa},P')) {\in} {\Delta^i}'_{l}&&\\
     ((l'_{a,{<}\kappa},P'), {>_i}\kappa,x_a=0,
     (l'_{a,{<}\kappa},P')) {\in} {\Delta^i}'_{l}~ &&\\
    ((l,P), a, \varphi, (l'_{a,{>}\zeta}, P), (\gamma,\{{<}\kappa, {>}\zeta\},-)) {\in} {\Delta^i}'_c && \text{if}\: {{<_i}\kappa} {\in} P
 \end{eqnarray*}
\end{enumerate}
\noindent{\emph{Return Transitions}}.
 For every $(l,a,I,\gamma,
\varphi,l') {\in} \Delta^i_r$, transitions in ${\Delta^i}'_r$ are:
\begin{enumerate}
\item $((l,P),a,(\gamma,\{{<}\kappa,{>}\zeta\},-),\varphi \wedge x_{{<_i}\kappa}{<}\kappa \wedge x_{{>_i}\zeta}{>}\zeta,(l',P))$ if $I=(\zeta,\kappa)$.
\item $((l,P),a,(\gamma,\{{<}\kappa,{>}\zeta\},\first),\varphi \wedge x_{{<_i}\kappa}{<}\kappa \wedge x_{{>_i}\zeta}{>}\zeta,(l',P'))$ \\
where $P' = P\backslash\{{<_i}\kappa\}$, if $I=(\zeta,\kappa)$. 
\item $((l,P),a,(\gamma,\{{<}\kappa\},-),\varphi \wedge x_{{<_i}\kappa}{<}\kappa,(l',P))$ if $I=[0,\kappa)$.
\item $((l,P),a,(\gamma,\{{<}\kappa\},\first),\varphi \wedge x_{{<_i}\kappa}{<}\kappa,(l',P'))$ with
$P'{=}P\backslash\{{<_i}\kappa\}$ if $I{=}[0,\kappa)$. 
\item $((l,P),a,(\gamma,\{{>}\zeta\},-),\varphi \wedge x_{{>_i}\zeta}{>}\zeta,(l',P))$ if $I=(\zeta,\infty)$.
\end{enumerate}

For the pop to be successful in $M'$, the guess made at the time of the push must be correct, and 
indeed at the time of the pop, the age must match the constraint.
The control state $(l^f,P)$ is reached in $M'$ on reading a word $w'$ iff 
$M$ accepts a string $w$ and reaches $l^f$. Accepting 
locations of $M'$ are of the form $(l^f,P)$ for $P \subseteq Z^{\sim}$.  
If $w$ is a matched word then $P$ is empty, and there must be any
obligations which are pending at the end.

Let $w=(a_1,t_1) \dots (a_i,t_i) \dots (a_n,t_n) {\in} L(M)$.
If $a_i \in \Sigma^i_c$, 
 we have in $L(M')$, a string $T_i$ between $(a_i,t_i)$ and $(a_{i+1},t_{i+1})$,  with 
 $|T_i| \leq 2$, and $T_i$ is a timed word of the form 
 $(b_{1i},t_i)(b_{2i},t_i)$ or  $(b_{1i},t_i)$.
The time stamp $t_i$ remains unchanged, and either $b_{1i}$ is $<_i\kappa$ or $\leq_i \kappa$  
or $b_{1i}$ is $>_i\zeta$, or 
$b_{1i}$ is  $>_i\zeta$ and $b_{2i}$ is one of $<_i \kappa$ or $\leq_i \kappa$  
 for some $\kappa,\zeta \leq k$. This follows from the three kinds of call transitions in $M'$.

 
In the construction above, 
it can shown by inducting on the length of words accepted that $h(L(M'))=L(M)$. 
Thus, $L(M') \neq \emptyset$ iff $L(M) \neq \emptyset$. 
If $M$ is a \ram{$k$-scope-\dtecmvpa{}, then $M'$ is a
$k$-scope-\ECMVPA{}}.
\ram{Since $M'$ is a $k$-scope-\ECMVPA{},  its emptiness check is
decidable using Theorem~\ref{lab:tm:empt-kecmvpa}, which uses 
the standard region
construction of event clock automata \cite{AFH99} to obtain a
$k$-scope-\MVPA{},}
which has a decidable emptiness \cite{latin10}.

\begin{theorem}
\label{lab:tm:empt}
The emptiness checking for $k$-scope \dtecmvpa{} is decidable. 
\end{theorem}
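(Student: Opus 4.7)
The plan is a two-step reduction: first eliminate dense-time on the stack to reduce a $k$-scope \dtecmvpa{} to a $k$-scope \ECMVPA{}, and then invoke Theorem~\ref{lab:tm:empt-kecmvpa} to conclude decidability. The first step is the substantive content and follows the ``untiming-the-stack'' construction sketched just before the statement; the second step is immediate.

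Concretely, given a $k$-scope \dtecmvpa{} $M=(L,\Sigma,\Gamma,L^0,F,\Delta)$ with maximum constant $\kappa$ appearing in return-time intervals, I would build a $k$-scope \ECMVPA{} $M'$ over an extended alphabet $\Sigma'=\Sigma\cup Z^{\sim}$ with stack alphabet $\Gamma\times 2^{S^\sim}\times\{\first,-\}$, exactly as defined before the theorem. The main idea is that $M'$ no longer remembers real-valued ages on the stack; instead, at every call, $M'$ \emph{guesses} which interval constraint will be imposed on the corresponding return, pushes that guess onto the stack together with the stack symbol, and immediately afterwards reads a fresh internal symbol from $Z^{\sim}$ (with a guard forcing zero time elapse) so as to reset the event recorder associated with that guessed constraint. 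The set $P$ in the control state tracks which $<_i\kappa$ obligations are currently \emph{alive} on stack $i$, so that when another call to stack $i$ reuses the same constraint we do not reset the recorder a second time; the $\first$/$-$ marker on the stack flags whether the popping of this symbol should also retract the obligation from $P$. At a return, $M'$ checks that the intervals on the top of stack match the interval $I$ of the return transition and that the corresponding recorder clocks satisfy the stored constraints.

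The correctness argument is then: the homomorphism $h\colon \Sigma'\times\Rplus\to\Sigma\times\Rplus$ that is the identity on $\Sigma$ and erases $Z^{\sim}$ satisfies $L(M)=h(L(M'))$. For the $(\supseteq)$ direction, every accepting run of $M'$ projects to an accepting run of $M$, because the recorder values checked at a return coincide (modulo the zero-time internal step) with the age of the corresponding stack symbol in $M$. For the $(\subseteq)$ direction, given any accepting run of $M$, one constructs a run of $M'$ by letting each call guess the actual interval that will later be used at its matching return, which exists by the determinism of the matching relation; this is proved by induction on the length of $w$, simultaneously maintaining the invariant that $P$ lists exactly the alive $<_i\kappa$ obligations. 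A key sub-observation is that since $M'$ only inserts internal symbols with zero time elapse, timestamps on $\Sigma$-letters are preserved verbatim, so $h$ commutes with timing.

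The main obstacle is bookkeeping rather than conceptual: one must check that the $\first$/$-$ flag is updated correctly across the interleaving of calls and returns on \emph{different} stacks and that the $k$-scope property is preserved by the construction. The latter is immediate because $M'$ inserts only internal symbols from $Z^{\sim}$, which belong to no stack's call/return alphabet and therefore do not create, split, or extend any stack-$h$ context; the matching relation on $\Sigma$-positions is identical in $M$ and $M'$, so $w$ is $k$-scoped in $M$ iff its pre-image in $M'$ is. Once $L(M)=h(L(M'))$ and $M'$ is a $k$-scope \ECMVPA{}, $L(M)\neq\emptyset$ iff $L(M')\neq\emptyset$, and Theorem~\ref{lab:tm:empt-kecmvpa} (itself reducing further to $k$-scope \MVPA{} emptiness via the standard region construction of~\cite{AFH99} and the decidability result of~\cite{latin10}) finishes the proof.
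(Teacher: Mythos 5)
Your proposal is correct and follows essentially the same route as the paper: the untiming-the-stack construction (guessing the pop interval at each call, pushing the guess with a $\first$/$-$ marker, resetting recorders for the fresh $Z^{\sim}$ symbols under a zero-time-elapse guard, and tracking alive obligations in the control-state set $P$) to obtain a $k$-scope \ECMVPA{} $M'$ with $L(M)=h(L(M'))$, followed by an appeal to Theorem~\ref{lab:tm:empt-kecmvpa}. Your explicit remark that the inserted internal symbols do not disturb the matching relation or the $k$-scope property is a useful detail the paper leaves implicit.
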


In \cite{BDKPT16} we have shown that $k$-round \dtecmvpa{} are determinizable. 
Using an untiming construction to get $k$-round \ecmvpa, determinize
it and again converting this to get deterministic $k$-round \dtecmvpa{}. 
 
     For $k$-scope \dtecmvpa{} using the stack untiming construction we get 
a $k$-scope \ecmvpa. This is determinized to get deterministic $k$-scope \ecmvpa.  
We convert this back to get deterministic $k$-scope \dtmvpa{}.  
The morphisms used for this conversion are same as in \cite{BDKPT16}. 


\begin{theorem}
\label{lab:tm:kdtdet}
The $k$-scope \dtecmvpa{} are determinizable. 
\end{theorem}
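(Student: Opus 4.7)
The plan is to follow the three-step strategy already sketched just before the statement and used in \cite{BDKPT16} for the $k$-round case: (i) apply the stack-untiming construction of this section to reduce the given $k$-scope \dtecmvpa{} $M$ to a $k$-scope \ECMVPA{} $M'$ over the extended alphabet $\Sigma'$, with $L(M)=h(L(M'))$; (ii) determinize $M'$ using Theorem~\ref{lab:tm:det-kecmvpa}, obtaining a deterministic $k$-scope \ECMVPA{} $M''$; and (iii) retime $M''$, using the inverse of the morphism $h$, back into a deterministic $k$-scope \dtecmvpa{} $N$ over the original alphabet $\Sigma$ with $L(N)=L(M)$.

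For step (iii), I would read off the retiming from the structure of $M'$. Recall that in $M'$ each call transition of $M$ was replaced by a timed call followed by at most two internal symbols from $Z_i^{\sim}$ (the guessed pop obligations $<_i\kappa$, $>_i\zeta$) that reset the corresponding event clocks, while each return transition of $M$ became a return transition of $M'$ whose guard compares those fresh event clocks against $\kappa$ or $\zeta$. In $N$, I would collapse each such $\Sigma \cdot Z_i^{\sim}$-block in $M''$ back into a single dense-timed call whose pushed stack symbol records the interval $I$ guessed in $M''$; the matching return of $N$ then fires exactly with the original interval $I$ and the original guard $\varphi$, using the age of the popped symbol in place of the event-clock comparisons. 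The control states, call/internal transitions on $\Sigma$, and final states are inherited directly from $M''$.

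The main obstacle is to verify that $N$ is indeed deterministic. Two issues must be controlled. First, the guessed pop obligations in $M'$ introduce nondeterministic branching at each call; because $M''$ is deterministic over $\Sigma'$, for every configuration and every call symbol $a$ there is a unique sequence of following internal symbols from $Z_i^{\sim}$, hence a unique guessed interval, and therefore a unique pushed stack symbol in $N$. Second, for the return transitions the conditions in the definition of determinism require that, for any two moves $(\ell, a, I_1, \gamma, \phi_1, \ell')$ and $(\ell, a, I_2, \gamma, \phi_2, \ell'')$ of $\Delta^h_r$, either $\phi_1 \wedge \phi_2$ is unsatisfiable or $I_1 \cap I_2 = \emptyset$. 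This can be arranged by choosing, in the untiming construction, the intervals used to approximate the finitely many constants $\{0,1,\dots,\kappa\}$ so that they form a partition of $[0,\infty)$; the pop transitions of $N$ then inherit pairwise disjoint intervals from the pairwise disjoint guards on the clocks $x_{<_i\kappa}, x_{>_i\zeta}$ in $M''$. Internal transitions are unchanged by the untiming and retiming, and so inherit determinism from $M''$ verbatim.

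Putting the three steps together, $N$ is a deterministic $k$-scope \dtecmvpa{} satisfying $L(N) = h(L(M'')) = h(L(M')) = L(M)$, which proves the theorem. The only nontrivial bookkeeping is the pairing of guessed intervals in $M'$ with the actual intervals in $N$, which I would handle exactly as in the analogous retiming step of \cite{BDKPT16}.
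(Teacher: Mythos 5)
Your proposal follows essentially the same route as the paper: untime the stacks to get a $k$-scope \ECMVPA{}, determinize it via Theorem~\ref{lab:tm:det-kecmvpa}, and then eliminate the auxiliary $Z_i^{\sim}$ symbols by merging each zero-delay internal transition back into its unique preceding call and rewriting the return guards on $x_{<_i\kappa}, x_{>_i\zeta}$ as age-interval checks. Your additional remark about arranging the guessed intervals into a partition to guarantee disjointness of return intervals is a slightly more explicit justification of determinism than the paper gives, but it is the same construction.
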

\begin{proof} 
Consider a $k$-scope \dtecmvpa{} $M=(L, \Sigma, \Gamma, L^0, F, \Delta)$ and the corresponding  
$k$-scope \ecmvpa{} $M'=(L', \Sigma', \Gamma', L'^0, F', \Delta')$
 as constructed in section \ref{app:untime}.
From Theorem~\ref{lab:tm:det-kecmvpa} we know that $M'$ is determinizable. 
Let $Det(M')$ be the determinized 
automaton such that $L(Det(M'))=L(M')$. That is,
$L(M)=h(L(Det(M')))$.  By construction of $M'$, 
we know that the new symbols introduced in $\Sigma'$ are $Z^{\sim}$
($\Sigma'_i=\Sigma_i \cup  Z_i^{\sim}$ for each $i$th stack )
and 
(i) no time elapse happens on reading symbols from $Z_i^{\sim}$, and 
(ii) no stack operations happen on reading  symbols of $Z_i^{\sim}$.  
Consider any transition in $Det(M')$ involving 
the new symbols. Since $Det(M')$ is deterministic, 
let $(s_1, \alpha, \varphi, s_2)$ be 
the unique transition on $\alpha {\in} Z_i^{\sim}$. In the following, we 
eliminate these transitions on  $Z_i^{\sim}$ preserving the language accepted by $M$ and 
the determinism of $det(M')$. In doing so, we will 
construct a \ram{$k$-scope \dtecmvpa{}} $M''$ which is deterministic, and 
which preserves the language of $M$. 
We now analyze various types for $\alpha {\in}   Z_i^{\sim}$. 

\begin{enumerate}
\item 
Assume that $\alpha$ is of the form ${>_i}\zeta$. 
Let $(s_1, \alpha, \varphi, s_2)$ be 
the unique transition on $\alpha {\in} Z_i^{\sim}$. By construction of $M'$ (and hence $det(M')$),  we 
know that $\varphi$ is $x_a=0$ for some 
$a {\in} \Sigma^i$. We also know that in $Det(M')$, there 
is a unique transition  
$(s_0, a, \psi, s_1, (\gamma, \alpha, -))$ preceding $(s_1, \alpha, \varphi, s_2)$.
 Since $(s_1, \alpha, \varphi, s_2)$ is a no time elapse transition, and does not touch any stack, 
 we can combine the two transitions from $s_0$ to $s_1$ and $s_1$ to $s_2$ to
 obtain  the call transition 
$(s_0, a, \psi, s_2, \gamma)$ for $i$th stack .  
 This eliminates transition on ${>_i}\zeta$. 

\item 
  Assume that $\alpha$ is of the form ${<_i}\kappa$.
Let $(s_1, \alpha, \varphi, s_2)$ be 
the unique transition on $\alpha {\in} Z_i^{\sim}$.
We also know that $\varphi$ is $x_a=0$ for some 
$a {\in} \Sigma^i$. From $M'$, we also know that in $Det(M')$, there 
is a unique transition of one of the following forms preceding $(s_1, \alpha, \varphi, s_2)$:
\begin{itemize}
\item[(a)] $(s_0, a, \psi, s_1, (\gamma, \alpha, -))$, 
(b) $(s_0, a, \psi, s_1, (\gamma, \alpha,\first))$, or 
\item[(c)] $(s_0, {>_i}\zeta, \varphi, s_1)$ where it is preceded by 
 $(s'_0,a, \psi, s_0,  (\gamma, \{\alpha, {>}\zeta\},X))$ for $X {\in} \{\first,-\}$. 
\end{itemize}
As $(s_1, \alpha, \varphi, s_2)$ is a no time elapse transition, 
and does not touch the stack, we can combine two transitions from  
$s_0$ to $s_1$ (cases $(a)$, $(b)$) and
$s_1$ to $s_2$ to obtain  the call transition
$(s_0, a, \psi, s_2, (\gamma, \alpha, -))$ or
$(s_0, a, \psi, s_2, (\gamma, \alpha, \first))$.   
This eliminates the transition on ${<_i}\kappa$. 

In case of transition $(c)$, we first eliminate 
the local  transition on  ${>_i}\zeta$ obtaining 
$(s'_0,a, \psi, s_1,  \gamma)$. 
This can then be combined with $(s_1, \alpha, \varphi, s_2)$ to obtain 
 the  call transitions 
$(s'_0,a, \psi, s_2,  \gamma)$. 
We have eliminated local transitions on ${<_i}\kappa$. 
\end{enumerate}
  
Merging transitions as done here does not affect transitions on any $\Sigma^i$ as
they simply eliminate the newly added transitions on
$\Sigma_i'\setminus \Sigma_i$. 
Recall that checking constraints on recorders $x_{<_i \kappa}$ and $x_{>_i \zeta}$ were required during return  transitions. 
We now modify the pop operations in $Det(M')$ as follows:
Return transitions have the following forms, and in all of these, $\varphi$ is a constraint 
checked on the clocks of $C_{\Sigma^i}$ in $M$ during return: 
transitions $(s, a, (\gamma, \{{<}\kappa\}, X), \varphi \wedge
  x_{{<_i}\kappa} {<}\kappa, s')$ for $X {\in} \{-, \first\}$ are modified to 
  $(s, a, [0, \kappa), \gamma, \varphi, s')$;
transitions $(s, a, (\gamma, \{{<}\kappa, {>}\zeta\}, X), \varphi
  \wedge x_{{>_i}\zeta} {{>}} \zeta \wedge x_{{<_i}\kappa} {{<}} \kappa, s')$ for
  $X {\in} \{-, \first\}$ are modified to   
$(s, a, (\zeta, \kappa), \gamma, \varphi, s')$; and 
transition $(s, a, (\gamma, \{{>}\zeta\}, -), \varphi \wedge
  x_{{>_i}\zeta} {>} \zeta, s')$ are modified to the transitions 
  $(s, a, (\zeta, \infty), \gamma, \varphi, s')$.
Now it is straightforward to verify that the $k$-scope $\dtecmvpa{}$ $M''$
obtained from the $k$-scope \ecmvpa{} $det(M')$ is
deterministic. Also, since we have only eliminated symbols of $Z^{\sim}$, we have
  $L(M'')=L(M)$ and $h(L(M''))=L(det(M'))$.  
This completes the proof of determinizability of $k$-scope \dtecmvpa{}.
 
\end{proof} 
 
It is easy to show that $k$-scoped \ecmvpa s and $k$-scoped 
\dtecmvpa s are closed under union and intersection; using  
Theorem \ref{lab:tm:det-kecmvpa} and Theorem \ref{lab:tm:kdtdet} 
we get closure under complementation. 
\begin{theorem}      
    \label{lab:tm:boolmain}
The classes of $k$-scoped \ecmvpl s and $k$-scoped \ram{\dtecmvpl{}s}  
    are closed under Boolean operations. 
\end{theorem}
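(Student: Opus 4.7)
The plan is to handle the three Boolean operations in turn, using the determinization results already established in Theorem~\ref{lab:tm:det-kecmvpa} and Theorem~\ref{lab:tm:kdtdet}. Fix $k$ and let $A_1, A_2$ be $k$-scope \dtecmvpa{} over a common visibly pushdown alphabet $\Sigma = \seq{\Sigma^h_c, \Sigma^h_r, \Sigma^h_l}_{h=1}^n$ with $L(A_i) = L(M_i) \cap Scope(\Sigma,k)$. The corresponding constructions for the \ecmvpa{} (untimed-stack) case are identical and I will not repeat them.

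For union, I would take a disjoint union. After renaming so that the location sets and stack alphabets are disjoint, form $A_\cup$ with locations $L_1 \cupdot L_2$, initial locations $L^0_1 \cup L^0_2$, finals $F_1 \cup F_2$, and transition relation $\Delta_1 \cup \Delta_2$ (where each transition operates only on symbols from the corresponding copy of $\Gamma^h$). Every run of $A_\cup$ is a run of exactly one of $A_1$ or $A_2$, so $L(M_\cup) = L(M_1) \cup L(M_2)$ and, intersecting with $Scope(\Sigma,k)$, we obtain closure under union. For intersection, I would use a synchronous product: locations are $L_1 \times L_2$, each stack $h$ carries paired symbols from $\Gamma^h_1 \times \Gamma^h_2$, and a transition on an input $a \in \Sigma^h_x$ requires both components to move simultaneously, so call transitions push pairs, return transitions pop pairs while checking $I_1 \cap I_2$ on the joint age, and clock guards are taken as the conjunction $\varphi_1 \wedge \varphi_2$. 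This yields $L(M_\cap) = L(M_1) \cap L(M_2)$, and since $k$-scopedness is a property of the input word (it does not depend on the automaton), the intersection still lies in $Scope(\Sigma,k)$.

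For complementation I would invoke Theorem~\ref{lab:tm:kdtdet} to obtain $Det(A_1)$, a deterministic $k$-scope \dtecmvpa{} with $L(Det(A_1)) = L(A_1)$. I would then \emph{complete} $Det(A_1)$ by adding a non-accepting sink location $l_\bot$ per visibility class and routing every missing transition to it: for each $(l, a)$ with $a$ internal, add a local transition to $l_\bot$ whenever no guard of $Det(A_1)$ is satisfied; for each call $a \in \Sigma^h_c$, add a push-to-sink transition (pushing a fresh completion symbol); and for each return $a \in \Sigma^h_r$, add transitions into $l_\bot$ for every pairing of top-of-stack symbol and interval region that is not already enabled, including the case when the top of stack is $\bott^h$. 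Finally, I would flip accepting and non-accepting locations to obtain $A_1^c$. Because the language of a $k$-scope automaton is automatically intersected with $Scope(\Sigma,k)$ by definition, $L(A_1^c) = Scope(\Sigma,k) \setminus L(A_1)$, the natural notion of complement for this class.

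The main obstacle is the completion step in the complementation argument: in the multi-stack, event-clock, visibly pushdown setting there are several ways a deterministic move can be missing, namely (a) no clock guard in $\Phi(C_h)$ is satisfied at the current valuation, (b) on a return $a \in \Sigma^h_r$ no triple $(l, a, I, \gamma, \varphi, l')$ matches the current top of stack $\gamma$ and its age, or (c) stack $h$ is empty (contains only $\bott^h$) when a return is read. All three sources of ``rejection by deadlock'' must be funneled to $l_\bot$ in a way that is itself deterministic and that respects the visibly pushdown discipline, namely that sink transitions on call symbols still push, on return symbols still pop, and on internal symbols leave stacks untouched. Once this completion is carried out carefully (tracking region information per stack as in Theorem~\ref{lab:tm:empt}), the deterministic, complete automaton indeed accepts exactly the $k$-scoped timed words \emph{not} in $L(A_1)$, and the theorem follows.
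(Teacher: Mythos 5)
Your proposal is correct and follows the same route as the paper, which simply notes that union and intersection are easy and derives complementation from the determinizability results of Theorems~\ref{lab:tm:det-kecmvpa} and~\ref{lab:tm:kdtdet}. You additionally spell out the product construction and the completion-plus-sink step (and correctly take complement relative to $Scope(\Sigma,k)$), which the paper leaves implicit.
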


\section{Logical Characterization of $k$-\dtecmvpa{}}
\label{sec:mso}

Let $w = (a_1, t_1), \ldots, (a_m, t_m)$ 
be a timed word 
over alphabet 
$\Sigma=\langle \Sigma^i_c, \Sigma^i_{l}, \Sigma^i_r \rangle_{i=1}^n$  as a \emph{word structure} over the universe $U=\{1,2,\dots,|w|\}$ 
of positions in $w$. 
We borrow definitions of predicates $Q_a(i), \ERPred{a}(i),
\EPPred{a}(i)$ from ~\cite{BDKRT15}.
Following~\cite{lics07}, we use the matching binary relation $\mu_j(i,k)$ 
which evaluates to true iff the $i$th position is a call and the $k$th position
is its matching return corresponding to the $j$th stack. 
We introduce the predicate $\theta_j(i) \in  I$ which evaluates to true on the
word structure  iff {$w[i] = (a, t_i)$ with $a \in \Sigma^j_r$} and {$w[i] \in \Sigma^j_r$}, and there is some $k<i$ such that 
$\mu_j(k,i)$ evaluates to true  and $t_i-t_k \in  I$.  
The predicate $\theta_j(i)$ measures time elapsed between position $k$
where a call was made on the stack $j$, and position $i$, its matching return. 
This time elapse is the age of the symbol pushed onto the stack during the call
at position $k$. Since position $i$ is the matching return, this
symbol is popped at $i$,
if the age lies in the interval $I$, the predicate evaluates to true.  
We define MSO($\Sigma$), the MSO logic over $\Sigma$, as:
\[
\varphi{:=}Q_a(x)~|~x {\in} X~|~\mu_j(x,y)~|~\ERPred{a}(x){\in} I~|~\EPPred{a}(x){\in}
I~|~\theta_j(x){\in} I~|\neg\varphi~|~\varphi {\vee}
\varphi|~\exists \, x.\varphi~|~\exists \, X.\varphi
\]
where $a {\in} \Sigma$, $x_a {\in} C_{\Sigma}$, $x$ is a first order variable and
$X$ is a second order variable. 

The models of a formula $\phi \in \mbox{MSO}(\Sigma)$ are timed words $w$ over
$\Sigma$. 
The semantics 
is standard where first order variables are
interpreted over positions  of $w$ and second order variables over subsets of
positions.
We define the language $L(\varphi)$ of an MSO sentence $\varphi$ as the set of
all words satisfying $\varphi$.

%
%
%
 
Words in $Scope(\Sigma,k)$, for some $k$, can be captured by an MSO
formula $Scope_k(\psi)= \andover_{1 \leq j \leq n} Scope_k(\psi)^j$, where $n$ is
number of stacks, where \\
\begin{center}
$Scope_k(\psi)^j= \forall y Q_a(y) \wedge a \in \Sigma^r_j 
    \Rightarrow( \exists x\mu_j(x,y) \wedge $
$( \psi_{kcnxt}^j \wedge \psi_{matcnxt}^j \wedge
\psi_{noextracnxt} 
    ))$
\end{center}
where
$\psi_{kcnxt}^j = \exists x_1,\ldots,x_k (x_1 \leq \ldots \leq
x_k \leq y \andover_{1 \leq q \leq k} (Q_a(x_q) \wedge a \in \Sigma_j
\wedge (Q_b(x_q-1) \Rightarrow b \notin \Sigma_j))$, 
 
\noindent 
and  
$\psi_{matcnxt}^j=
\orover_{1\leq q \leq k} \forall x_i (x_q \leq x_i \leq x (Q_c(x_i)
\Rightarrow c \in \Sigma_j)) 
$, and \\
$\psi_{noextracnxt}=
\exists x_l (x_1 \leq x_l \leq y) (Q_a(l) \wedge a \in \Sigma_j \wedge
Q_b(x_{l}-1) \wedge b \in \Sigma_j) \Rightarrow x_l \in
\{x_1,\ldots,x_k\}.$ 
 
Formulas $\psi_{noextracnxt}$ and $\psi_{kcnxt}$ say that there are at most $k$
contexts of $j$-th stack, while formula $\psi_{matcnxt}$ says where 
matching call position $x$ of return position $y$ is found. 
Conjuncting the formula obtained from a \dtecmvpa{} $M$ with  $Scope(\psi)$ accepts only those words 
which lie in $L(M) \cap Scope(\Sigma,k)$. Likewise, if one considers any MSO formula 
$\zeta=\varphi \wedge  Scope(\psi)$, it can be shown that the \dtecmvpa{} 
$M$ constructed for $\zeta$ will be a $k$-\dtecmvpa{}.

Hence we have the following MSO characterization. 

\begin{theorem}
\label{th:mso}
 A language $L$ over $\Sigma$ is accepted by an \name{k-scope \dtecmvpa{}} iff there is a
 MSO sentence $\varphi$ over $\Sigma$ such that $L(\varphi) \cap Scope(\Sigma,k)=L$. 
\end{theorem}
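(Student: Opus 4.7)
The plan is to prove a Büchi-Elgot-Trakhtenbrot style equivalence, handling each direction separately and leveraging the Boolean closure of $k$-scope $\dtecmvpa{}$ established in Theorem~\ref{lab:tm:boolmain}. Since the scope conjunct $Scope(\Sigma,k)$ is applied outside the MSO formula, neither the constructed sentence nor the inductively built automata need internally enforce scope-boundedness; this decouples the two concerns.

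For the forward direction, given $A=(k, M)$ with $M=(L,\Sigma,\Gamma,L^0,F,\Delta)$, I would construct $\varphi_A$ by existentially quantifying a second-order variable $X_\ell$ for each location $\ell \in L$ (marking positions where the run enters $\ell$) together with, for each stack $h$ and $\gamma \in \Gamma^h$, a variable $Y^h_\gamma$ (marking call positions where $\gamma$ is pushed on stack $h$). The sentence asserts that $\{X_\ell\}_{\ell\in L}$ partitions the universe, that the first position is compatible with some initial location in $L^0$, that the last position lies in $F$, and that at every position a transition of $\Delta$ fires: at internal and call positions this reduces to first-order conjunctions of $Q_a$, $\ERPred{a}(x){\in} I$, $\EPPred{a}(x){\in} I$, and $X_\ell, Y^h_\gamma$ membership; at a return position $i$ labeled $a\in\Sigma^h_r$, the matching predicate $\mu_h(k,i)$ locates the corresponding call $k$, the pushed symbol is recovered from the unique $Y^h_\gamma(k)$, and the age interval is verified via $\theta_h(i){\in} I$ together with the event-clock guard. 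Since $L(A) \subseteq Scope(\Sigma,k)$, this immediately gives $L(\varphi_A)\cap Scope(\Sigma,k)=L(A)$.

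For the backward direction, I would proceed by structural induction on $\varphi$: for every subformula $\psi(\bar x,\bar X)$ I build a $k$-scope $\dtecmvpa{}$ $A_\psi$ over the extended alphabet $\Sigma\times\{0,1\}^{|\bar x|+|\bar X|}$ (each position carries a bit per free variable, positional for first-order and subset-membership for second-order) whose language is exactly the set of encoded $k$-scoped timed words satisfying $\psi$. The atomic predicates $Q_a(x)$, $x{\in} X$, $\mu_j(x,y)$, $\ERPred{a}(x){\in} I$, and $\EPPred{a}(x){\in} I$ each admit small $\dtecmvpa{}$; for $\theta_j(x){\in} I$ the automaton pushes a dedicated stack symbol on stack $j$ at the marked call $x$ and enforces the age interval $I$ via a $\Delta^j_r$-transition at its matched return. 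Boolean connectives are handled by Theorem~\ref{lab:tm:boolmain}, and existential quantification by alphabet projection, which preserves $k$-scopedness because the stack partition $\langle\Sigma^h_c,\Sigma^h_r,\Sigma^h_l\rangle$ is intrinsic to the base alphabet and untouched by adding or dropping variable bits. The final automaton is $A_\varphi$, and $L(A_\varphi)=L(\varphi)\cap Scope(\Sigma,k)$ holds by the inductive invariant.

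The main obstacle I expect is two-fold. First, realising the atomic predicate $\theta_j(x){\in} I$ as a $k$-scope $\dtecmvpa{}$ when $x$ is an unfixed positional marker: the automaton must nondeterministically identify the uniquely marked position, push on stack $j$ only there, and later enforce the age check at the matched return, all while interacting correctly with the other stacks and remaining scope-bounded. Second, verifying that complementation and projection over the extended alphabet still produce $k$-scope $\dtecmvpa{}$, which reduces to observing that $\Sigma\times\{0,1\}^{|\bar x|+|\bar X|}$ inherits the call/return/internal partition of $\Sigma$, so the constructions underlying Theorem~\ref{lab:tm:boolmain} (and Theorem~\ref{lab:tm:kdtdet} used for complementation) transfer unchanged. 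Once these building blocks are in place, the equivalence follows by a routine induction on formula structure.
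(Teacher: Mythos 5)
Your overall architecture matches the paper's: a Büchi--Elgot--Trakhtenbrot translation in both directions, with the automaton-to-logic direction encoding runs via location variables and push-position variables, and the logic-to-automaton direction proceeding by structural induction over an extended alphabet $\Sigma\times\{0,1\}^{m+n}$, using Boolean closure (Theorem~\ref{lab:tm:boolmain}) for connectives and projection for quantifiers. The forward direction and the treatment of $\theta_j(x)\in I$ are essentially as in the paper.

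However, there is a genuine gap in your backward direction, precisely at the step you flag and then dismiss. You argue that existential quantification is ``alphabet projection, which preserves $k$-scopedness because the stack partition is intrinsic to the base alphabet.'' The problem with projection is not $k$-scopedness or the call/return partition --- it is the event-clock semantics. A \dtecmvpa{} over the extended alphabet implicitly owns one recorder and one predictor \emph{per extended symbol} $(a,v)$, and the clock $x_{(a,v)}$ measures the time since the last occurrence of that exact extended symbol. After projecting out a variable bit, the timing constraints of the intermediate automaton refer to clocks of symbols that the projection collapses, and the image language is in general not recognizable by any event-clock automaton over the smaller alphabet: event-clock automata (hence \dtecmvpa{}) are \emph{not} closed under projection. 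The paper explicitly confronts this and resolves it by working with quasi-\dtecmvpa{}, where a ranking function groups all extended symbols $(a,d_1,\dots,d_{m+n})$ with the same base letter $a$ into one class sharing a single recorder/predictor pair; this makes the clock semantics invariant under dropping valuation bits, and Lemma~\ref{lab:lm:qdtmvpa} restores equivalence with ordinary \dtecmvpa{}. Without this (or an equivalent device), your inductive invariant $L(A_\psi)=L(\psi)\cap Scope(\Sigma,k)$ does not survive the $\exists x$ and $\exists X$ cases, and the induction breaks. A secondary consequence is that your atomic automata for $\ERPred{a}(x)\in I$ and $\EPPred{a}(x)\in I$ must already be phrased with the shared per-base-letter clocks rather than per-extended-symbol clocks, since the predicate refers to the last/next occurrence of the base letter $a$ irrespective of the valuation bits it carries.
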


The two directions, \dtecmvpa{} 
to MSO, as well as MSO to \dtecmvpa{} can be handled using standard
techniques, and can be found in Appendix~\ref{app:mso}.
 

\section{Conclusion}
In this work we have seen timed context languages characterized by $k$-scope  
\ecmvpa{} and dense-timed $k$-scope \dtecmvpa{}, along with logical characterizations  
for both classes (for $k$-scope \ecmvpa{}, equivalent MSO is obtained  
by dropping predicate $\theta_j(x){\in} I$--which checks if the aging time of  
pushed symbol $x$ of stack $j$~falls in interval $I$--from MSO of 
$k$-scope \dtecmvpa{}). 

Here, while reading an input symbol of a stack, clock constraints
involve the clocks of symbols associated with the same stack. It would
be interesting to see if our results hold
without this restriction.   
Another direction, would be to use alternate methods like Ramsey
theorem based \cite{Abdulla11,ramsey-vpa} or Anti-chain
based \cite{FogartyV10,BruyereDG13} methods avoiding  
complementation and hence determinization to do language inclusion check.

\bibliographystyle{splncs04}
\bibliography{ms}

\newpage
\appendix
\centerline{\Large \bf Appendix}

\section{Case study: Modeling A single channel packet switching communication network}
A single channel packet switching communication network uses a common medium to send fixed sized packets from a transmitter to one or many receivers.
Such medium may be a bus or a cable or any other transmission medium.
The communication network consists of multiple stations sharing a single error free communication channel supported by the medium.
All stations transmit fixed sized packets of data over the channel and stations receive them.
The time is slotted into frames and the transmission of new packet can start only at the beginning of the frame.
It is possible that one than one stations may transmit packets simultaneously, and in that case packets are said to collide.
Such collision results in the transmission error and whether a collision has occured or not is decided at the end of frame.
If the collision has happended, it is visible to all stations and failure of communication is concluded.
If no collision occurs, packet is assumed to be transmitted successfully.
All stations employ same protocol to avoid collision and retransmission.
Such choice of such protocol determines various characteristics of the network like stability and maximum throughput.
Here we model the Capetanakis-Tsybakov-Mikhailov (CTM) protocol~\cite{TsyVve80} using a \dtecmvpa{} shown in Figure~\ref{fig:ctm}. It shows state transitions for station $i$. The actual automaton for CTM protocol is constructed by taking cross product of automata for all stations.

\begin{figure}[h]
 \includegraphics{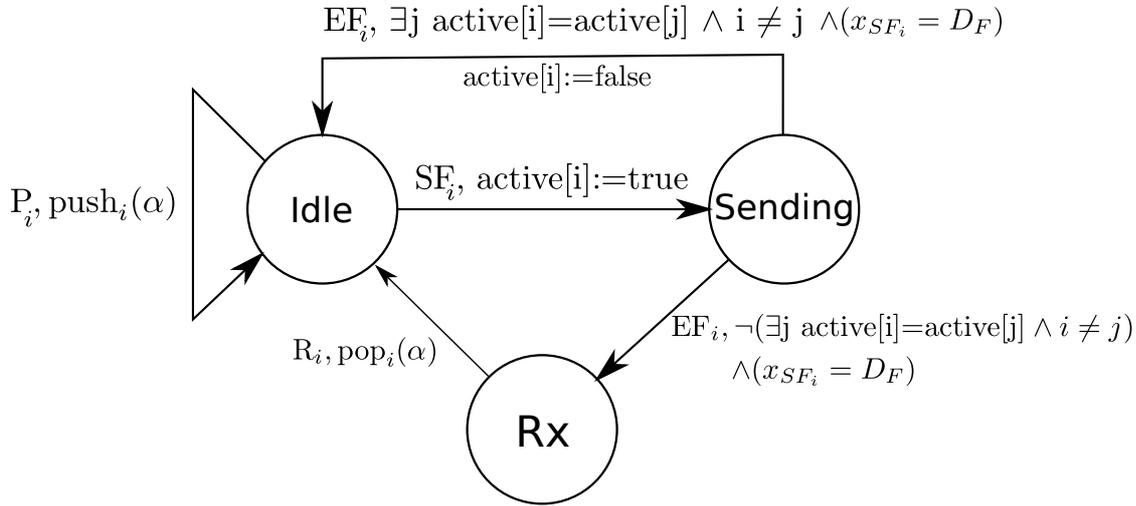}
 \caption{Model of CTM protocol for station $i$}
 \label{fig:ctm}
\end{figure}

For each station $i$, the symbol $\text{SF}_i$, $\text{EF}_i$, $\text{P}_i$ and $\text{R}_i$ denote the start of frame, end of frame, arrival of new packet and receipt of packet respectively. We use a global shared variable array \textsf{active} to record stations which are actively sending packets.
\textsf{active[i]} is true iff station $i$ is sending packet.
We detect the occurance of collision if more that one entry in \textsf{active} is true.
The constant $D_F$ denote the durations of frame (i.e. time difference between \text{SF} and \text{EF}).
There is an idle period between two successive frames (i.e. time difference between \text{EF} for first frame and \text{SF} of the next frame). We denote it using constant $D_I$.
The event clocks have been used to measure such timing requirements.
CTM algorithm expects that start and end of frame is synchronized for all stations.
Such requirement can be taken care by introducing extra symbols during initialization phase (not shown in Figure~\ref{fig:ctm}).
We introduce additional return input symbol $P'_0$.
At global time zero input symbol $P_0$ occurs which pushes $\alpha_0$ on the stack at station 0.
We then require that input symbols $\text{SF}_0 \dots \text{SF}_n$ occur.
We then mandate $P'_0$ to occur when the age of popped $\alpha_0$ is zero.
This ensures that the all $\text{SF}_i$ are synchronized when global time is zero.

\section{Details of Theorem \ref{th:mso}}
\label{app:mso}
Here, we give the details of the translations from
\ram{\dtecmvpa{}} to MSO and conversely.  
A technical point is regarding the projection operation: in general, it is
known that event clock automata (hence \dtecmvpa{}) are not closed under
projections.  
However, we need to handle projections while quantifying out variables 
in the MSO to \ram{\dtecmvpa{}} construction.
We do this by working on Quasi \ram{\dtecmvpa{}} where the underlying alphabet $\Sigma$
is partitioned into finitely many buckets  
 $P_1, \dots, P_k$ via   a ranking function $\rho: \Sigma \rightarrow \mathbb{N}$.
All symbols in a $P_j$ are then ``equivalent" : we assign one event recorder and  
one event predictor per $P_i$. This helps in arguing the correctness of the
constructed \ram{\dtecmvpa{}} from an MSO formula while projecting out variables. 
In Section \ref{app:quasi}, we show the equi-expressiveness of  
\ram{quasi \dtecmvpa{} and \dtecmvpa{}} which allows us to complete the logical characterization.
\begin{itemize}
  \item {\bf Logic to automata.} 
    We first show that the language accepted by an  MSO formula $\varphi$ over
    $\Sigma=\langle\Sigma^i_c, \Sigma^i_{l}, \Sigma^i_r\rangle_{i=1}^n$,
    $L(\varphi)$ is accepted by a \dtecmvpa{}.  
    Let $Z=(x_1, \dots, x_m, X_1, \dots, X_n)$ be the free variables in
    $\varphi$. 
    As usual, we work on the extended alphabet
    $\Sigma'=\langle{\Sigma^i}'_c, {\Sigma^i}'_{l}, {\Sigma^i}'_r\rangle_{i=0}^n$ 
    where
    \[
      {\Sigma^i}'_s{=}\Sigma^i_s {\times} (Val:Z \rightarrow \{0,1\}^{m+n}),
    \]
    for $s   \in \{c,l,r\}$.  
    A word $w'$ over $\Sigma'$ encodes a word over $\Sigma$ along with the valuation of all first order and second order variables.  
    Thus ${\Sigma^i}'$ consists of all symbols $(a,v)$ where $a \in \Sigma^i$ is such
    that $v(x)=1$ means  that $x$ is assigned the position $i$ of $a$ in the word $w$, while $v(x)=0$ means that $x$ is not assigned 
    the position of $a$ in $w$.  
    Similarly, $v(X)=1$ means that the position $i$ of $a$ in $w$ belongs to the set $X$. 
    Next, we use quasi-event clocks for $\Sigma'$ by assigning suitable ranking function.
    Quasi \dtecmvpa{} are equiexpressive to \dtecmvpa{} as explained in Section \ref{app:quasi}.
    We partition each ${\Sigma^i}'$ such that for a fixed $a \in \Sigma^i$, all symbols of the
form $(a,d_1, \dots, d_{m+n})$ and $d_i \in \{0,1\}$ lie in the same partition  
($a$ determines their partition).
Let $\rho':\Sigma' \rightarrow \mathbb{N}$ be the ranking function of $\Sigma'$
wrt above partitioning scheme. 

Let $L(\psi)$ be the set of all words $w'$ over $\Sigma'$ such that the
underlying word $w$ over $\Sigma$ satisfies formula $\psi$ along with the
valuation $Val$.  
Structurally inducting over $\psi$, we show that $L(\psi)$ is accepted by a
\ram{\dtecmvpa{}.}
The cases   $Q_a(x), \mu_j(x,y)$ are exactly as in \cite{lics07}. We only
discuss the predicate $\theta_j$  here.  
Consider the atomic formula $\theta_j(x) \in I$. To handle 
this, we build a {\dtecmvpa{}} that keeps pushing 
symbols $(a,v)$ onto the stack $j$ whenever $a \in \Sigma^j_c$, initializing 
the age to 0 on push. It keeps popping 
the stack on reading return symbols $(a',v')$, $a' \in \Sigma^j_r$, and checks whether 
$v'(x)=1$ and $age(a',v') \in I$. It accepts on finding 
such a pop. The check  $v'(x)=1$ ensures that this is the matching return 
of the call made at position $x$. The check  $age(a',v') \in I$
confirms that the age of this symbol pushed at position $x$ is indeed in the interval $I$. 
Negations, conjunctions, and disjunctions follow from the closure properties of 
\ram{\dtecmvpa{}. }

Existential quantifications correspond to a projection by excluding the chosen variable from the valuation and renaming the alphabet $\Sigma'$. 
Let $M$ be a \ram{\dtecmvpa{}} constructed for $\varphi(x_1, \dots,x_n,X_1, \dots, X_m)$ 
over $\Sigma'$.
Consider $\exists x_i. \varphi(x_1, \dots,x_n,X_1, \dots, X_m)$ for some first
order variable $x_i$.  
Let $Z_i=(x_1, \dots, x_{i-1},x_{i+1}, \dots,x_n, X_1, \dots, X_m)$ by removing $x_i$ from $Z$. 
We simply work on the alphabet $\Sigma'{\downarrow} i=\Sigma \times (Val:Z_i \rightarrow \{0,1\}^{m+n-1})$.
Note that $\Sigma'{\downarrow} i$ is partitioned exactly in the same way as $\Sigma'$. For a fixed $a \in \Sigma$, 
all symbols $(a,d_1, \dots, d_{m+n-1})$ for $d_i \in \{0,1\}$ lie in the same partition.
Thus, $\Sigma'$ and $\Sigma'{\downarrow} i$ have exactly the same number of partitions, namely
$|\Sigma|$.
  Thus, an event clock $x_a=x_{(a,d_1, \dots, d_{m+n})}$ 
 used in $M$ can be used the same way while constructing the automaton 
 for $\exists x_i. \varphi(x_1, \dots,x_n,X_1, \dots, X_m)$. The case of $\exists X_i. \varphi(x_1, \dots,x_n,X_1, \dots, X_m)$
is similar.
Hence we obtain in all cases, a \ram{\dtecmvpa{}} that accepts $L(\psi)$ when 
$\psi$ is an MSO  sentence.

\item {\bf Automata to logic.}
Consider a \dtecmvpa{} $M=(L, \Sigma, \Gamma, L^0, F, \Delta)$. 
For each stack $i$, let $C^i_{\gamma}$ denote a second order variable
which collects all positions where $\gamma$ is pushed in stack $i$. Similarly, 
let $R^i_{\gamma}$ be a second order variable which collects all positions where 
$\gamma$ is popped from stack $i$. Let $X_{l_i}$ be a second order variable which collects all positions where the location is $l_i$ in a run. Let $\mathcal{C}$, $\mathcal{R}$ and
$\mathcal{L}$ respectively be the set of these variables.  

The MSO formula encoding runs of the \ram{\dtecmvpa{}}  is:
$\exists \mathcal{L} ~\exists \mathcal{C} ~\exists \mathcal{R}
~\varphi(\mathcal{L}, \mathcal{C}, \mathcal{R})$.
We assert that the starting position must belong to $X_{l}$ for some $l \in
L^0$. Successive positions must be connected by an appropriate transition. To complete
the reduction we list these constraints.
\begin{itemize}
\item
  For call transitions $(\ell_i, a, \psi, \ell_j, \gamma) \in \Delta^h_c$, for
  positions $x,y$, assert 
\hspace{-8em}  \[
  X_{\ell_i}(x) \wedge X_{\ell_j}(y) \wedge Q_a(x) \wedge C^h_\gamma(x) 
    \wedge  
  \]
\[ 
\hspace{8em} 
\bigwedge_{b \in \Sigma^h} \Big ( \big ( \bigwedge_{(x_b \in I) \in \psi}
    \ERPred{b}(x) \in I \big ) \wedge \big ( \bigwedge_{(y_b \in I) \in \psi}
    \EPPred{b}(x) \in I \big ) \Big ).
    \]
\item
  For return transitions $(\ell_i, a, I, \gamma, \psi, \ell_j) \in \Delta^h_r$ for
  positions $x$ and $y$ we assert that
  \[
  X_{\ell_i}(x) \wedge X_{\ell_j}(y) \wedge Q_a(x) \wedge R^h_\gamma(x) \wedge \theta^h(x) {\in} I 
  \wedge \]
  \[
\hspace{8em} \bigwedge_{b \in \Sigma^h} \Big ( \big ( \bigwedge_{(x_b \in I) \in \psi}
  \ERPred{b}(x) {\in} I \big ) \,{\wedge} \big ( \bigwedge_{(y_b \in I) {\in} \psi}
  \EPPred{b}(x) {\in} I \big ) \Big ).
  \]
  \item
  Finally, for internal transitions $(\ell_i, a, \psi, \ell_j) \in
\Delta^h_{l}$ for
  positions $x$ and $y$ we assert
  \[
  X_{\ell_i}(x) \wedge X_{\ell_j}(y) \wedge Q_a(x) \wedge
    \bigwedge_{b \in \Sigma^h} \Big ( \big ( \bigwedge_{(x_b \in I)
\in \psi} \hspace*{-.4cm}
    \ERPred{b}(x) \in I \big ) \wedge \big ( \bigwedge_{(y_b \in I) \in \psi} \hspace*{-.4cm}
    \EPPred{b}(x) \in I \big ) \Big ).
    \]
  \end{itemize}
  We also assert that the last position of the word belongs to some $X_l$ such that 
there is a transition (call, return, local) from $l$ to an accepting location. The encoding 
of all three  kinds of transitions are given as above. 
Additionally, we assert that corresponding call and return positions should
match, i.e. 
\[
\forall x \forall y \, \mu_j(x, y) \Rightarrow \bigvee_{\gamma \in \Gamma^j
  \setminus \bott_j} C^j_\gamma(x) \wedge R^j_\gamma(y).
\]
\end{itemize}

\subsection{Remaining part: Quasi \ram{\dtecmvpa{}}}
\label{app:quasi}
A quasi \name{k-\dtecmvpa{}} is a weaker form of \name{k-\dtecmvpa{}} where more than one input symbols share the same event clock.
Let the finite input alphabet $\Sigma$ be partitioned into finitely many classes
via a ranking function $\rho:  \Sigma \to  \Nat$ giving rise to finitely many
partitions $P_1, \dots, P_k$ of $\Sigma$ where 
$P_i = \set{a \in \Sigma \mid \rho(a)=i}$.
The event recorder $x_{P_i}$ records the time elapsed since the last occurrence
of some action in $P_i$, while the event predictor $y_{P_i}$ predicts the time
required for any action of $P_i$ to occur. 
Notice that since clock resets are ``visible'' in input timed word, the
clock valuations after reading a prefix of the word are also determined by the
timed word.  

\begin{definition}[Quasi \name{k-\dtecmvpa{}}]
A quasi \ram{$k$-\dtecmvpa{}} over alphabet $\Sigma
{=} \set{\Sigma^i_c, \Sigma^i_r, \Sigma^i_{l}}_{i=1}^n$ is a tuple $M {=} 
  (L, \Sigma, \rho, \Gamma, L^0, F, \Delta)$ 
where $L$ is a finite set of locations
including a set $L^0 \subseteq L$ of initial locations, 
$\rho$ is the ranking function,
$\Gamma$ is the stack alphabet
and $F {\subseteq} L$ is a set of final locations.\end{definition}

\begin{lemma}
\label{lab:lm:qdtmvpa}
 Quasi \name{k-\dtecmvpa{}} and \name{k-\dtecmvpa{}} are effectively equivalent.
\end{lemma}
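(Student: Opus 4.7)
The proof has two directions, of which one is immediate and the other requires a syntactic rewriting of clock guards.

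For the easy direction (every k-\dtecmvpa{} is a quasi k-\dtecmvpa{}), I would observe that any k-\dtecmvpa{} $M = (L, \Sigma, \Gamma, L^0, F, \Delta)$ is already a quasi k-\dtecmvpa{} under the finest ranking function $\rho(a) = a$ that places every symbol in its own singleton partition. With this $\rho$, the quasi recorder $x_{P_a}$ coincides with the per-symbol recorder $x_a$, and analogously for predictors, so the transitions, stack operations and accepting runs are unchanged.

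For the converse direction, given a quasi k-\dtecmvpa{} $M = (L, \Sigma, \rho, \Gamma, L^0, F, \Delta)$ with partitions $P_1, \ldots, P_k$, I would build a regular k-\dtecmvpa{} $M'$ with the same location set, initial and final locations, stack alphabet, and transition structure, differing only in that each guard $\varphi \in \Phi(\{x_{P_i}, y_{P_i}\})$ of $M$ is rewritten into an equivalent guard $\varphi'$ over the per-symbol event clocks $\{x_a, y_a \mid a \in \Sigma\}$. The crucial observation is that for every position $j$ of any input timed word $w$, the quasi-clock values are determined pointwise by the per-symbol clocks via
\[
\nu_j^w(x_{P_i}) \;=\; \min_{a \in P_i} \nu_j^w(x_a), \qquad \nu_j^w(y_{P_i}) \;=\; \min_{a \in P_i} \nu_j^w(y_a),
\]
treating the undefined value $\nan$ as $+\infty$ for the minimum. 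Consequently, each atomic constraint can be translated:
\[
x_{P_i} \leq c \;\equiv\; \bigvee_{a \in P_i} x_a \leq c, \qquad x_{P_i} \geq c \;\equiv\; \bigwedge_{a \in P_i} x_a \geq c,
\]
and similarly for $y_{P_i}$. Boolean combinations of atomic constraints in $\varphi$ are handled by distributing these replacements through. Define $M'$ over the alphabet $\Sigma$ with the trivial (singleton) ranking, using the rewritten guards $\varphi'$ in place of every $\varphi$; the stack transitions, including age-interval checks on pops, are copied verbatim since those do not depend on event clocks at all.

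Correctness then follows by a straightforward induction on the length of a run: configurations $(\ell, \nu_j^w, \ldots)$ and transitions in $M$ are matched one-for-one by configurations and transitions in $M'$, and $\nu_j^w \models \varphi$ in the quasi semantics iff $\nu_j^w \models \varphi'$ in the per-symbol semantics by the equivalences above. Since the translation leaves $k$, the stack partition and the $k$-scope property unaffected, $M'$ is a genuine k-\dtecmvpa{} with $L(M') = L(M)$. The main technical nuisance—rather than a real obstacle—is handling the undefined value $\nan$ uniformly under negation: a constraint $\neg(x_{P_i} \leq c)$ should unfold to $\bigwedge_{a \in P_i}(x_a = \nan \vee x_a > c)$, which is expressible as a boolean combination after encoding $\nan$-tests via guards on whether a symbol of $P_i$ has occurred (trackable in the finite control if needed). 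Once this bookkeeping is set up, the construction is effective and yields the required equivalence.
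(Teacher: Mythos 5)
Your overall strategy --- the finest ranking for the easy direction, and for the converse a guard rewriting based on the identity $\nu_j^w(x_{P_i})=\min_{a\in P_i}\nu_j^w(x_a)$ (and likewise for predictors) --- is sound and is in the same spirit as what the paper does: the paper gives no self-contained argument here and simply invokes the single-stack construction of Lemma~\ref{lab:lm:qdtvpa1} from~\cite{BDKRT15}, which is exactly this kind of clock/guard translation lifted to the multistack setting, where it goes through because the stacks and the scope bound are untouched.

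One point in your write-up is wrong as displayed and deserves more than the "technical nuisance" framing you give it. The equivalence $x_{P_i}\geq c\equiv\bigwedge_{a\in P_i}x_a\geq c$ fails already for \emph{positive} atoms, not only under negation: if some $a\in P_i$ has never occurred, then $\nu_j^w(x_a)=\nan$ and $\nan\geq c$ is false in the event-clock semantics, so the conjunction is false, whereas $x_{P_i}$ is defined (some other symbol of $P_i$ did occur) and may well satisfy $x_{P_i}\geq c$. The correct unfolding is ``some $a\in P_i$ has occurred, and every $a\in P_i$ that has occurred satisfies $x_a\geq c$,'' which forces you to know the occurrence set of $P_i$ at each position. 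For recorders this is indeed trackable in the finite control (a subset of $\Sigma$), but for predictors the analogous set is the set of symbols still to occur in the \emph{future}, which cannot be tracked forward; you need the automaton to nondeterministically guess this set and verify the guess along the run. That is a standard device and preserves everything needed (the lemma only claims language equivalence, not preservation of determinism), but it should be stated explicitly, since without it the construction as written accepts the wrong language on words in which a partition is only partially inhabited.
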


The proof of Lemma~\ref{lab:lm:qdtmvpa} is obtained by using the construction 
proposed in the proof of Lemma~\ref{lab:lm:qdtvpa1} for single stack
machines. 
 
\begin{lemma}[\cite{BDKRT15}]        
The class of $q$-\dtvpa{} is equally expressive as the class of $\dtvpa$. 
\label{lab:lm:qdtvpa1} 
\end{lemma}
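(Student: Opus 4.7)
The plan is to prove the two inclusions separately. The easy direction is to show that every \dtvpa{} can be viewed as a quasi-\dtvpa{}: simply take the ranking function $\rho : \Sigma \to \Nat$ to be injective, so that each partition $P_i$ is a singleton $\{a_i\}$. Then $x_{P_i}$ and $y_{P_i}$ coincide with $x_{a_i}$ and $y_{a_i}$, and every transition and constraint of the original \dtvpa{} is preserved verbatim.

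The nontrivial direction is to simulate a quasi-\dtvpa{} $M=(L,\Sigma,\rho,\Gamma,L^0,F,\Delta)$ by an ordinary \dtvpa{} $M'$. The key observation is that for any timed word and any partition class $P_i$, the value of the shared recorder $x_{P_i}$ at position $j$ equals $x_a$ where $a \in P_i$ is the most recent symbol from $P_i$ read before position $j$; dually, the value of the shared predictor $y_{P_i}$ equals $y_a$ where $a \in P_i$ is the next occurrence of a symbol from $P_i$ after position $j$. I would therefore augment the finite control of $M'$ to store, for each partition $P_i$, a pair $(\mathrm{last}_i, \mathrm{next}_i) \in (P_i \cup \{\bot\}) \times (P_i \cup \{\bot\})$, recording which concrete symbol currently represents the shared recorder/predictor of class $P_i$.

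The transitions of $M'$ are then obtained from those of $M$ by mechanically rewriting every guard: each atomic constraint $x_{P_i} \sim c$ (resp.\ $y_{P_i} \sim c$) is replaced by $x_{\mathrm{last}_i} \sim c$ (resp.\ $y_{\mathrm{next}_i} \sim c$), treating the constraint as false whenever the relevant component is $\bot$. When reading a symbol $a \in P_i$, the transition verifies that the currently guessed $\mathrm{next}_i$ equals $a$ (or is $\bot$, in which case we commit to $a$ being the final $P_i$-occurrence), updates $\mathrm{last}_i := a$, and nondeterministically picks a new $\mathrm{next}_i \in P_i \cup \{\bot\}$; at the final position of the word, every $\mathrm{next}_i$ must be $\bot$. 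Stack contents, call/return/internal classification, and the stack alphabet $\Gamma$ are inherited unchanged, so $M'$ is a genuine \dtvpa{}. A straightforward induction on the length of the input timed word shows that $M'$ has a run on $w$ with $(\mathrm{last}_i, \mathrm{next}_i)$ coinciding at each position with the true previous/next $P_i$-symbols of $w$ if and only if $M$ has a run on $w$, giving $L(M')=L(M)$.

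The main obstacle is correctly handling the predictor component, because it forces $M'$ to guess future behaviour. This is resolved exactly as in the classical untiming of predictor clocks for event-clock automata (cf.\ \cite{AFH99}): the guess is committed to the finite state, and the commitment is validated later when the guessed symbol is actually read, while the semantics of predictor clocks automatically rules out inconsistent guesses. The only subtlety specific to the visibly pushdown setting is that the $\mathrm{next}_i$ component must be carried unchanged across call and return transitions that do not read a $P_i$-symbol, which is immediate from the construction since $(\mathrm{last}_i,\mathrm{next}_i)$ lives in the finite control and is independent of the stack.
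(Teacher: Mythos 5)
The paper does not actually reproduce a proof of this lemma: it is imported from \cite{BDKRT15}, and the present text only remarks that the multi-stack version (Lemma~\ref{lab:lm:qdtmvpa}) follows from the same single-stack construction. Your construction is the standard one for eliminating quasi-clocks---record the most recent $P_i$-symbol deterministically, guess the next one nondeterministically, rewrite each guard $x_{P_i}\sim c$ (resp.\ $y_{P_i}\sim c$) as a constraint on the corresponding concrete clock, and validate the guesses when symbols are read and at the end of the word---and the two observations it rests on ($x_{P_i}=x_a$ for the most recent $a\in P_i$ already read, and $y_{P_i}=y_a$ for the next $a\in P_i$ to occur) are both correct. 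Since the call/return/internal partition, the stack alphabet and the aging of stack symbols are untouched, the result is indeed a \dtvpa{}, so the overall strategy is sound and matches what \cite{BDKRT15} does.

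One point needs repair. You allow a symbol $a\in P_i$ to be read while $\mathrm{next}_i=\bot$, ``committing to $a$ being the final $P_i$-occurrence.'' Under your own convention, $\mathrm{next}_i=\bot$ asserts that no further $P_i$-symbol occurs, and it causes every atomic constraint on $y_{P_i}$ to be evaluated as false; reading some $a\in P_i$ afterwards falsifies that assertion, so the run must block rather than continue. Otherwise, because the guards of the automaton are arbitrary Boolean combinations of atoms $z\sim c$ (negation included), a transition guarded by $\neg(y_{P_i}\leq c)$ could be taken spuriously while $\mathrm{next}_i=\bot$ even though a $P_i$-symbol does occur within $c$ time units, producing an accepting run of $M'$ on a word outside $L(M)$. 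The fix is the usual one: a transition reading $a\in P_i$ must require $\mathrm{next}_i=a$; the guess $\bot$ is validated only by reaching the end of the word with no further $P_i$-symbol; and the initial locations must already carry a nondeterministic guess of $\mathrm{next}_i$ for every class, since the very first guard may mention $y_{P_i}$. With that correction the induction you sketch goes through.
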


\begin{paragraph}{Valid homomorphisms for quasi \name{k-\dtMVPA{}}}
Let $\Sigma = \set{\Sigma^i_c, \Sigma^i_r, \Sigma^i_{l}}_{i=1}^n$ and
$\Pi = \set{\Pi^i_c, \Pi^i_r, \Pi^i_{l}}_{i=1}^n$
be the alphabets of \name{k-\dtMVPA{}}s $M_1 = (L_1, \Sigma, \rho_1, \Gamma_1, L^0_1, F_1, \Delta_1)$ and $M_2 = (L_2, \Pi, \rho_2, \Gamma_2, L^0_2, F_2, \Delta_2)$ respectively.
A homomorphism $h:\Sigma \mapsto \Pi$ is said to \emph{valid} iff following conditions are satisfied
 \begin{itemize}  
  \item $h$ preserves stack mapping $i.e.$ $a \in \Sigma^i_c$ iff
$h(a) \in \Pi^i_c$, $b \in \Sigma^i_r$ iff $h(b) \in \Pi^i_r$ and $c
\in \Sigma^i_{l}$ iff $h(c) \in \Pi^i_{l}$
  \item $h$ preserves event clock partition $i.e.$ $\rho_1(a) = \rho_2(h(a))$ for all $a \in \Sigma$.
 \end{itemize}
\end{paragraph}
 
\end{document}